\newcommand{\bind}{\mathrel{\scalebox{0.5}[1]{$>\!>=$}}}
\newcommand{\hide}[1]{}
\rmfamily\color{gray},    % comment style
\newcommand{\mlstinline}[1]{\mbox{\lstinline|#1|}}
\newtheorem{desideratum}{Desideratum}
\newcommand{\MainCategory}{\mathbf{ImP}}
\DeclareMathOperator*{\colim}{colim}
\newcommand{\rx}{{\color{red}r}}
\newcommand{\gx}{{\color{ForestGreen}g}}
\newcommand{\bx}{{\color{blue}b}}
\newcommand{\NN}{\mathbb{N}}
\newcommand{\Atoms}{\mathbb{A}}
\begin{document}
%%%Note the beginning and end of the frontmatter section that starts here%%%%%
\title{Compositional imprecise probability}
\subtitle{A solution from graded monads and Markov categories}
  %% Title here and the
% \thanks[ALL]{General thanks to everyone who should be thanked.}   %%Text of \thanks[ALL} here..
 %%%%%%%%%%%%%%%%%%%%%%%%%%%%			This Thanks is optional.
  %%%%Now the author(s) names(s)%%%%%
\author{Jack Liell-Cock}	%%Note NO SPACE between 
\author{Sam Staton}		%last name and \thanksref{...} 
%%% Next come the addresses%%%%
% \address{University of Oxford}  							
%   \thanks[myemail]{Email: \href{mailto:myuserid@mydept.myinst.myedu} {\texttt{\normalshape
%        myuserid@mydept.myinst.dyed}}} 
   %%%Note: if both authors share same institution, only list the address once, after the second 
   %%%author. 
   %%%There also is a link from the first author to the co-author's address to show how to list 
   %%%affiliations to more than one institution, when needed. 
\begin{abstract} 
\emph{Imprecise probability} is concerned with uncertainty about which probability distributions to use. It has applications in robust statistics and machine learning.

We look at programming language models for imprecise probability. Our desiderata are that we would like our model to support all kinds of composition, categorical and monoidal; in other words, guided by dataflow diagrams. Another equivalent perspective is that we would like a model of synthetic probability in the sense of Markov categories.

Imprecise probability can be modelled in various ways, with the leading monad-based approach using convex sets of probability distributions. This model is not fully compositional because the monad involved is not commutative, meaning it does not have a proper monoidal structure. In this work, we provide a new fully compositional account. The key idea is to name the non-deterministic choices. To manage the renamings and disjointness of names, we use graded monads. We show that the resulting compositional model is maximal and relate it with the earlier monadic approach, proving that we obtain tighter bounds on the uncertainty.

%To provide a compositional foundation for robust statistics, we provide a graded Markov category, equivalently a graded monad, that combines probability and non-determinism and yet maintains commutativity. We show how to recover convex powersets of distributions from the graded monad structure and prove that the grading is necessary for commutativity.
\end{abstract}
\maketitle
\section{Overview}

This paper is about using programming language notations to give compositional descriptions of imprecise probability.
For illustration, consider a situation with three outcomes: red ($\rx$), green ($\gx$) and blue ($\bx$). A precise probability distribution can be understood as a point in the triangle: the corner ($\rx$) represents 100\% certainty of red; the points on the edge between $\gx$ and $\bx$ represent the probability distributions where~$\rx$ is impossible (Figure~\ref{fig:tris}a).
\begin{figure}[h]
  \begin{center}
\begin{tikzpicture}
\filldraw[black] (-1.5,1.5) node[anchor=west]{(a)};
\filldraw[black] (0,0) circle (1pt) node[anchor=north]{$p$};
\filldraw[black] (0,0.577) circle (1pt) node[anchor=south]{$q$};
\draw[gray, thick] (-1,0) -- (1,0);
\draw[gray, thick] (-1,0) -- (0,1.732);
\draw[gray, thick] (1,0) -- (0,1.732);
\filldraw[black] (-1,0) circle (1pt) node[anchor=east]{$\gx$};
\filldraw[black] (1,0) circle (1pt) node[anchor=west]{$\bx$};
\filldraw[black] (0,1.732) circle (1pt) node[anchor=south]{$\rx$};
\end{tikzpicture}
\hspace{.5cm}
  \begin{tikzpicture}
\draw[black,very thick] (-.01,0) -- (0,1.732);
\draw[black,very thick] (.01,0) -- (0,1.732);
  \filldraw[black] (-1.5,1.5) node[anchor=west]{(b)};
\draw[gray, thick] (-1,0) -- (1,0);
\draw[gray, thick] (-1,0) -- (0,1.732);
\draw[gray, thick] (1,0) -- (0,1.732);
\filldraw[black] (0,0) circle (1pt) node[anchor=north]{$p$};
\filldraw[black] (-1,0) circle (1pt) node[anchor=east]{$\gx$};
\filldraw[black] (1,0) circle (1pt) node[anchor=west]{$\bx$};
\filldraw[black] (0,1.732) circle (1pt) node[anchor=south]{$\rx$};
\end{tikzpicture}
\hspace{.5cm}
\begin{tikzpicture}
  \filldraw[color=black, fill=black!5, ultra thick](-.5,.866) -- (0,1.732) -- (.5,.866) -- (0,0) -- (-.5,.866) ;
  \filldraw[black] (-1.5,1.5) node[anchor=west]{(c)};
\filldraw[black] (0,0) circle (1pt) node[anchor=north]{$p$};
\draw[gray, thick] (-1,0) -- (1,0);
\draw[gray, thick] (-1,0) -- (0,1.732);
\draw[gray, thick] (1,0) -- (0,1.732);
\filldraw[black] (-1,0) circle (1pt) node[anchor=east]{$\gx$};
\filldraw[black] (1,0) circle (1pt) node[anchor=west]{$\bx$};
\filldraw[black] (0,1.732) circle (1pt) node[anchor=south]{$\rx$};
\end{tikzpicture}
\hspace{.5cm}
\begin{tikzpicture}\filldraw[color=black, fill=black!5, ultra thick](-1.1102230246251565e-16,0.5773333333333333) -- (0.6666666666666666,0.5773333333333333) -- (0.33333333333333326,0.0) -- (-0.3333333333333334,0.0) -- (-1.1102230246251565e-16,0.5773333333333333);
    \filldraw[black] (-1.5,1.5) node[anchor=west]{(d)};
\draw[gray, thick] (-1,0) -- (1,0); \draw[gray, thick] (-1,0) -- (0,1.732); \draw[gray, thick] (1,0) -- (0,1.732); \filldraw[black] (-1,0) circle (1pt) node[anchor=east]{$\gx$};\filldraw[black] (1,0) circle (1pt) node[anchor=west]{$\bx$};\filldraw[black] (0,1.732) circle (1pt) node[anchor=south]{$\rx$};
\filldraw[black] (0,0) circle (0pt) node[anchor=north]{$\phantom{p}$};
\end{tikzpicture}
\end{center}
\caption{(a)~Five probabilities over the three-point set $\{\rx,\gx,\bx\}$ illustrated as points in the triangle: the three extreme points are the corners; $p$ is the equal odds chance between $\bx$ and $\gx$; $q$ is the equal odds chance between all three points.
  (b)~A line indicating a convex region between $\rx$ and $p$, which includes~$q$.
  (c)~A convex region which is the convex hull of four points, including $\rx$, $p$ and also the equal odds chance between $\rx$ and~$\bx$ and between $\rx$ and $\gx$. (d)~A different convex region, considered in \cite[Ex.~7.3]{walley}.\label{fig:tris}}
\end{figure}

An \emph{imprecise probability} on three outcomes is a convex region of the triangle (Figure~\ref{fig:tris}b--\ref{fig:tris}d). One interpretation is that if a probability distribution describes a bet, as in the foundations of Bayesianism, then a convex region is a collection of bets that would be reasonable given the current imprecise knowledge.
Imprecise probability has a long history in statistical robustness (e.g.~\cite{walley,huber}),
economics (e.g.~\cite{Amarante2007,Battigalli2015,BATTIGALLI2019185,imp-econ,vicigFinancialRiskMeasurement2008,imp-econ2}), and
engineering (e.g.~\cite{fersonConstructingProbabilityBoxes2003,imp-engineering-book,imp-aerospace,imp-intro-book}).
In machine learning, there has been a recent integration of imprecise probability in
Bayesian learning (e.g.~\cite{caprio2024credalbayesiandeeplearning}),
reinforcement learning (e.g.~\cite{oren2024epistemicmontecarlotree,zanger2023}),
conformal prediction (e.g.~\cite{stutz2023conformalpredictionambiguousground,javanmardi2024conformalizedcredalsetpredictors}),
infrabayesianism (e.g.~\cite{infrabayes-physicalism,inframeasure,infradomain}),
and the foundations of safe AI~\cite{aria-1,dalrymple2024guaranteedsafeaiframework}.

There is already a body of work on semantics models of programming languages with imprecise probability~\cite{mio-sarkis-vignudelli,mio-vignudelli,mio,petrisan-sarkis,goy-petrisan,jacobs-2008,adje-static-analysis-imp-prob,gl-previsions,vhos-convex-lang,gl-previsionsB,keimel-plotkin,keimel-mfps,mislove-ouaknine-worrell,varacca-winskel,mciver-morgan}; we discuss this further in \S\ref{sec:notes-about-context}.
Our contribution is to investigate new models that support our compositional desiderata (\S\ref{sec:intro:desiderata}) by naming the non-deterministic choices (\S\ref{sec:intro:names}), and to
investigate syntax for this graded programming language to streamline equational reasoning.
We show that this gives tighter uncertainty bounds than earlier work (Thm.~\ref{thm:unc-bounds}) and that it is a maximal approach (Thm.~\ref{thm:maximality}). 

\subsection{Desiderata: a language for imprecise probability with compositional reasoning}
\label{sec:intro:desiderata}
To focus on the essence of imprecise probability, we look here at a minimal language, that is a first-order functional language without first-class functions or recursion. Rather, we have if-then-else statements, sequencing with immutable variable assignment (like~\cite{moggi:computation_and_monads,freyd-cats}), and the following two commands, which both return a boolean value:
\begin{itemize}
\item \lstinline|bernoulli|: a fair Bernoulli choice~\cite{bernoulli} which draws a ball from some urn containing two balls labelled `true' and `false', and replaces it;
\item \lstinline|knight|: a Knightian choice~\cite{knight} which draws a ball from a fresh urn containing balls labelled `true' and `false', where the number and ratio of balls are unknown and we have no priors on their distribution, except to know that the urn is not empty. (These `Knightian urns' are fresh each time, so they can each be used only once. That is, we are not interested in using multiple draws and frequencies to predict their contents.)
\end{itemize}
For example, consider the following two programs.
\begin{example} The following program, we argue, describes the convex region in Figure~\ref{fig:tris}b:
\label{exA}
\begin{lstlisting}
    x <- knight ; z <- bernoulli ;
    if z then (if x then rx else gx)
         else (if x then rx else bx)
\end{lstlisting}
       We draw two boolean values, \lstinline|x| and \lstinline|z|, respectively with Knightian uncertainty and from a fair Bernoulli trial. We then combine these two boolean values using the logic on the second and third lines of the program.
\end{example}
\begin{example}The following program describes the convex region in Figure~\ref{fig:tris}c:
\label{exB}
\begin{lstlisting}
    x <- knight ; y <- knight ; z <- bernoulli ;
    if z then (if x then rx else gx)
         else (if y then rx else bx)
\end{lstlisting}
This time, we draw three boolean values, \lstinline|x|, \lstinline|y| and \lstinline|z|, where \lstinline|y| is with Knightian uncertainty too. We then combine these three boolean values using the logic on the second and third lines of the program, which is almost the same except for the use of \lstinline|y| when \lstinline|z| is false.
Decoupling the Knightian uncertainties increases the region of imprecise probability because it allows new outcomes (such as an equal chance between $\rx$ and $\bx$ when~\lstinline|x| is true and \lstinline|y| is false) that were impossible in Example~\ref{exA}.
\end{example}

Our desiderata for a \emph{compositional account} of a first-order language are the following. We are inspired by recent compositional accounts of probability theory (e.g.~\cite{kock:commutative-monads-as-a-theory-of-distributions,fritz,jacobs-commutative-effectus}), statistics (e.g.~\cite{partial-markov,jacobs-kissinger-zanasi,braithwaite-compositional-bayes}), and probabilistic programming (e.g.~\cite{lazyppl,DBLP:conf/lics/JiaLMZ21,s-finite}), and the connections between them (e.g.~\cite{stein-staton}). These desiderata are formalized in \S\ref{sec:graded-markov-cats}.

\begin{desideratum} \label{des:1}
The language should be commutative:
\begin{align*}\mlstinline{x <- t ; y <- u ; v} \quad&=\quad \mlstinline{y <- u ; x <- t ; v}
                                                      &&
\text{ (if $x$ is not free in $u$ and $y$ not free in $t$)}
\intertext{and affine:}
                                                         \mlstinline{(x <- t ; u)} \quad&=\quad \mlstinline{u}
                                                                                        &&
                                                                                           \text{ (if $x$ is not free in $u$).}\end{align*}
                                                                                         \end{desideratum}
 This means we can regard composition graphically, as a data flow graph. For instance, the notation
\[
  \begin{tikzpicture}[triangle/.style = {fill=gray!20, regular polygon, regular polygon sides=3}, scale=0.8, every node/.style={transform shape}]
\path (0,0) node [triangle,draw,shape border rotate=-90,label=178:$\vdots$] (a) {$u$} (0,2) node [triangle,draw,shape border rotate=-90,label=178:$\vdots$] (b) {$t$} (2.5,1) node [triangle,draw,shape border rotate=-90,label=135:$x$,label=230:$y$,label=178:$^{\vdots}$] (c) {$v$};
\draw [-] (a) .. controls +(right:2cm) and +(left:1cm).. (c.220);
\draw [-] (b) .. controls +(right:2cm) and +(left:1cm).. (c.140);
\draw [-] (c) to node [above] {} (3.5,1);
\draw [-] (-.85,2.3) to node [above] {} (-.3,2.3);
\draw [-] (-.85,1.6) to node [below] {} (-.3,1.6);
\draw [-] (-.85,0.3) to node [above] {} (-.3,0.3);
\draw [-] (-.85,-0.4) to node [below] {} (-.3,-0.4);
\node () at (4.5,1) {$=$};
\end{tikzpicture}
\qquad
  \begin{tikzpicture}[triangle/.style = {fill=gray!20, regular polygon, regular polygon sides=3},scale=0.8, every node/.style={transform shape}]
\path (0,0) node [triangle,draw,shape border rotate=-90,label=178:$\vdots$] (a) {$u$} (0.5,2) node [triangle,draw,shape border rotate=-90,label=178:$\vdots$] (b) {$t$} (2.5,1) node [triangle,draw,shape border rotate=-90,label=135:$x$,label=230:$y$,label=178:$^{\vdots}$] (c) {$v$};
\draw [-] (a) .. controls +(right:2cm) and +(left:1cm).. (c.220);
\draw [-] (b) .. controls +(right:2cm) and +(left:1cm).. (c.140);
\draw [-] (c) to node [above] {} (3.5,1);
\draw [-] (-0.35,2.3) to node [above] {} (.2,2.3);
\draw [-] (-0.35,1.6) to node [below] {} (.2,1.6);
\draw [-] (-.85,0.3) to node [above] {} (-.3,0.3);
\draw [-] (-.85,-0.4) to node [below] {} (-.3,-0.4);
\node () at (4.5,1) {$=$};
\end{tikzpicture}
\qquad
  \begin{tikzpicture}[triangle/.style = {fill=gray!20, regular polygon, regular polygon sides=3},scale=0.8, every node/.style={transform shape}]
\path (0.5,0) node [triangle,draw,shape border rotate=-90,label=178:$\vdots$] (a) {$u$} (0,2) node [triangle,draw,shape border rotate=-90,label=178:$\vdots$] (b) {$t$} (2.5,1) node [triangle,draw,shape border rotate=-90,label=135:$x$,label=230:$y$,label=178:$^{\vdots}$] (c) {$v$};
\draw [-] (a) .. controls +(right:2cm) and +(left:1cm).. (c.220);
\draw [-] (b) .. controls +(right:2cm) and +(left:1cm).. (c.140);
\draw [-] (c) to node [above] {} (3.5,1);
\draw [-] (-.85,2.3) to node [above] {} (-.3,2.3);
\draw [-] (-.85,1.6) to node [below] {} (-.3,1.6);
\draw [-] (-.35,0.3) to node [above] {} (.2,0.3);
\draw [-] (-.35,-0.4) to node [below] {} (.2,-0.4);
\end{tikzpicture}
\] is not ambiguous. Desideratum~\ref{des:1} is also considered a fundamental aspect of the abstract axiomatization of probability~\cite{kock-comm} with commutativity amounting to Fubini's theorem in the measure-theoretic setting~\cite{kock:commutative-monads-as-a-theory-of-distributions}.

Although this requirement does not hold generally in the presence of memory side effects and mutable variables, we do not have mutable variables here, and it is desirable in a declarative language.
For example, we would like to notate the program from Example~\ref{exA} as
\[
  \begin{tikzpicture}[triangle/.style = {fill=gray!20, isosceles triangle, isosceles triangle stretches}, ,scale=0.8, every node/.style={transform shape}]
\path (0,0) node [triangle,draw,inner sep=1pt] (a) {\lstinline|bernoulli |} (0.053,2) node [triangle, draw,inner sep=1pt] (b) {\lstinline|knight    |} (4.5,1) node [triangle,draw,label=160:$x$,label=200:$z$] (c) {\mlstinline{if z then ...}};
\draw [-] (a) .. controls +(right:2cm) and +(left:1cm).. (c.200);
\draw [-] (b) .. controls +(right:2cm) and +(left:1cm).. (c.160);
\draw [-] (c) to node [above] {} (7.5,1);
\end{tikzpicture}
\]
\begin{desideratum} \label{des:2}
The standard equational reasoning about if-then-else should apply, and in particular, the following hoisting equation should be allowed:
\[
\mlstinline{if b then (x <- t ; u) else (x <- t ; v)}
\quad=\quad
\mlstinline{x <- t ; if b then u else v}
\]
where $x$ is not free in $b$.
\end{desideratum}

This hoisting equation follows from the standard beta and eta laws for if-then-else statements, which are the basic building blocks for equational reasoning.
In models of effects following Moggi's monadic approach~\cite{moggi-computational-lambda, moggi:computation_and_monads}, this property
follows from the base category being distributive, which has long been argued as desirable (e.g.~\cite{cockett-distr}) and thus assumed in all
systems following this approach (e.g.~\cite{dash-thesis, s-finite}). Nevertheless, we briefly discuss the possibility of dropping Desideratum~\ref{des:2} in \S\ref{sec:quotients}.

One earlier approach to a semantic study of a language like this is provided by a convex powerset of distributions monad (e.g.~\cite{keimel-plotkin,mio-sarkis-vignudelli,bonchi-sokolova-vignudelli,bonchi-sokolova-vignudelli-B,mio-vignudelli,goy-petrisan,jacobs-2008}). This does not satisfy the desiderata for compositional reasoning. In fact, no semantic model satisfying the desiderata can allow Examples~\ref{exA} and~\ref{exB} to be distinguished, as we show in Figure~\ref{fig:derivation}.
The key issue is with the third program in Figure~\ref{fig:derivation}:
\begin{lstlisting}
    z <- bernoulli ; if z then (x <- knight ; if x then rx else gx)
                         else (x <- knight ; if x then rx else bx)
\end{lstlisting}
This program draws a boolean value with Knightian uncertainty on each of the branches of the if statement. The paradox arises in whether each choice comes from different urns or the same urn.
Perhaps there is one Knightian draw that is used in both branches. Or perhaps we draw a boolean value from a new Knightian urn on the second branch.
Our proposed solution is to make this distinction explicit.

\begin{figure}
\begin{lstlisting}[frame=single,basicstyle=\small\sffamily]
   x <- knight ; z <- bernoulli ;
   if z then (if x then rx else gx) else (if x then rx else bx)
= -- Desideratum 1 (commutativity)
   z <- bernoulli ; x <- knight ;
   if z then (if x then rx else gx) else (if x then rx else bx)
= -- Desideratum 2
   z <- bernoulli ; if z then (x <- knight ; if x then rx else gx)
                         else (x <- knight ; if x then rx else bx)
= -- Alpha renaming
   z <- bernoulli ; if z then (x <- knight ; if x then rx else gx)
                         else (y <- knight ; if y then rx else bx)
= -- Desideratum 1 (affine)
   z <- bernoulli ; if z then (x <- knight ; y <- knight ; if x then rx else gx)
                         else (x <- knight ; y <- knight ; if y then rx else bx)
= -- Desideratum 2
   z <- bernoulli ; x <- knight ; y <- knight ;
   if z then (if x then rx else gx) else (if y then rx else bx)
= -- Desideratum 1 (commutativity)
   x <- knight ; y <- knight ; z <- bernoulli ; 
   if z then (if x then rx else gx) else (if y then rx else bx)
\end{lstlisting}
\caption{An equational derivation that Examples~\ref{exA} and \ref{exB} must be equal if Desiderata~1 and~2 are satisfied.\label{fig:derivation}}
\end{figure}

\subsection{Resolution: named Knightian choices}
\label{sec:intro:names}
To satisfy both desiderata, \textbf{our proposal} is to name each Knightian choice (\S\ref{sec:our-category}).
To do this, we rewrite Example~\ref{exA} by annotating the only Knightian choice with the name $a_1$:
\begin{lstlisting}
    x <- knight(a1) ; z <- bernoulli ;
    if z then (if x then rx else gx)
         else (if x then rx else bx)
\end{lstlisting}
We think of this program as giving rise to the convex set in Figure~\ref{fig:tris}b.
Now when we try to follow the same equational derivation as in Figure~\ref{fig:derivation}, the third program becomes:
\begin{lstlisting}
    z <- bernoulli ; if z then (x <- knight(a1) ; if x then rx else gx)
                         else (x <- knight(a1) ; if x then rx else bx)
\end{lstlisting}
This program is equivalent even if we were to alpha rename $x$ to $y$ on the second branch because they use the same Knightian choice, i.e.~one with the same name.
\begin{lstlisting}[label=lst:prog1, caption=Convex set in Figure~\ref{fig:tris}b]
    z <- bernoulli ; if z then (x <- knight(a1) ; if x then rx else gx)
                         else (y <- knight(a1) ; if y then rx else bx)
\end{lstlisting}
However, it is different from the program where \lstinline|y| describes a different Knightian choice, i.e.~one with a different name:
\begin{lstlisting}[label=lst:prog2, caption=Convex set in Figure~\ref{fig:tris}c]
    z <- bernoulli ; if z then (x <- knight(a1) ; if x then rx else gx)
                         else (y <- knight(a2) ; if y then rx else bx)
\end{lstlisting}
From this program, we can continue the derivation as in Figure~\ref{fig:derivation}, using affinity and Desideratum~\ref{des:1} to get:
\begin{lstlisting}
    x <- knight(a1) ; y <- knight(a2) ; z <- bernoulli ;
    if z then (if x then rx else gx)
         else (if y then rx else bx)
\end{lstlisting}
which is intuitively what Example~\ref{exB} describes, and gives rise to the convex set in Figure~\ref{fig:tris}c.
Hence, by explicitly labelling the Knightian choices, we have carved a distinction between Examples~\ref{exA} and~\ref{exB}.

The idea of naming non-deterministic choices appears in work outside probability (e.g.~proved transitions \cite{boudol-castellani}) and probabilistic choices are often named in practical probabilistic programming~\cite[\S6.2]{probprog-intro} which has already been explored using graded monads~\cite{lew-trace-types}. More generally, intensionality in non-determinism is known to be a profitable perspective (e.g.~\cite{weighted-relational-model,cfw-bicat-rec-domains}).

\subsubsection{Named Knightian choices via a reader monad}
The set-up with named Knightian choices is consistent with Desiderata~1 and~2, which we can show by building a monad (e.g.~\cite{moggi:computation_and_monads}), namely the reader transformer (e.g.~\cite{monad-trans}) of the finite distributions monad (e.g.~\cite[Ch.~2]{bart-prob-book}):
\begin{equation}\label{eqn:reader}
  T_{2^A}(X) = [2^A\Rightarrow D(X)]
\end{equation}
where $X$ is the set of outcomes, $A$ is the set of names required, and $D$ is the finite distributions monad. Then the Knightian choices are interpreted by reading, and the Bernoulli choices use the distributions monad. This combined monad is well known to be commutative and affine. Thus both desiderata are satisfied.

We can recover a convex set of probability distributions from any $t\in T_{2^A}(X)$ by pushing forward all the possible probability distributions on $2^A$. Formally, we can express this using the monadic bind ($\bind_D$, Kleisli composition) of $D$:
\[
  \llbracket t\rrbracket_{2^A} = \{p\bind_D t~|~p\in D(2^A)\}\subseteq D(X)\text.
\]

\subsubsection{Grading to account for renamings}\label{sec:intro:grading}
A remaining concern with named Knightian choices is that we ought to take seriously name-space issues in composition. When composing programs with named Knightian choices, we must account for name clashes.
This depends on how we interpret the set $A$ in \eqref{eqn:reader}.

We resolve this issue by regarding the monad~\eqref{eqn:reader} as a graded monad~\cite{katsumata-param-monads,orchard-wadler-eades,kammarplotkin}.
This is closely related to the `para' construction (e.g.~\cite{backprop,hermida-tennent}).
The grading is crucial for defining the composition of programs, with the main steps being:
\begin{itemize}
\item Any injection $\iota:A\to B$ induces a renaming of programs using names $A$ to programs using names $B$, and indeed a natural map $T_{2^\iota}:T_{2^A}(X)\to T_{2^B}(X)$;
\item We can regard monadic bind (Kleisli composition) in $T$ as operating on disjoint sets of names:
  \[
    \bind_T\ : \ T_{2^A}(X) \times (X\Rightarrow T_{2^B} (Y)) \to T_{2^{A\uplus B}}(Y)
  \]
  Thus a computation using names $A$ is sequenced with a computation using names~$B$ to build a computation that involves names $(A\uplus B)$.
\item This monad is graded-monoidal too, via a map
  \[T_{2^A}(X)\times T_{2^B}(Y)\to T_{2^{(A\uplus B)}}(X\times Y)
  \]
  which juxtaposes computations using names~$A$ and~$B$ to give a computation using $(A\uplus B)$. 
\item The induced convex set of distributions is invariant under renaming: $\llbracket t\rrbracket_{2^A} = \llbracket T_{2^\iota}(t)\rrbracket_{2^B}$.
\end{itemize}
The key element is that the injective renaming $\iota$ induces a surjection $2^\iota:2^B\to 2^A$
between the spaces of Knightian choices. We abstract and generalize by allowing arbitrary surjections $2^B\to 2^A$, further by allowing sets other than $2^A$, and further still by allowing surjective stochastic maps rather than surjections. 
Quotienting via these regradings moreover leads to the canonical connection of the monad with convex sets of distributions (Proposition~\ref{prop:kan-ext}).

In the rest of the paper, we start from the setting of Markov categories (\S\ref{sec:graded-markov-cats}) rather than monads, since they are a minimal framework suited to our minimal language. We make the connection with monads in \S\ref{sec:markov-monad}, introducing the formalism for named choices in \S\ref{sec:our-category}.

\medskip

As an aside, we note that probability monads too can often be regarded as sort-of reader monads (e.g.~\cite{tavares2021language,lazyppl,DBLP:journals/pacmpl/VakarKS19,barker-monad,Shiebler2021categorical,simpson-giry,de-randvar}), since probability distributions $D(X)$ can be described by random variables $\Omega\to X$, for some base probability sample space $\Omega$. Thus we could regard our monad $T(X)$ as a quotient of 
\[
  [(\Omega\times \Xi)\Rightarrow X]
\]
where $\Omega$ is a sample space for Bernoulli probability and $\Xi=2^A$ is a sample space for Knightian uncertainty. 
In this work, we will quotient by the `law' of random variables in $\Omega$, so that the usual equational reasoning about Bernoulli probability is valid.

\subsubsection{An algebraic perspective}
\newcommand{\probplus}{+_{0.5}}
\newcommand{\ndplus}{\oplus}
\newcommand{\defeq}{\stackrel {\text{def}}=}

There is a large literature on combining probability with non-determinism, some of which take the perspective of algebraic theories. In the context of algebraic effects, the commands
\lstinline|bernoulli| and \lstinline|knight| are \emph{generic effects}~\cite{pp-algop-geneff} for probabilistic and Knightian choice, with which we can define two binary operators:
\[(t\probplus u) \defeq  \mlstinline{if bernoulli then t else u}
  \qquad
  (t\ndplus u) \defeq \mlstinline{if knight then t else u}
\]
We provide a more detailed discussion on this perspective in \S\ref{sec:alg-perspective} and \S\ref{sec:graded-alg}.
In brief, Desideratum~\ref{des:1} states that these operators commute with each other~\eqref{eqn:commutativity} and are idempotent~\eqref{eqn:affine}. Desideratum~\ref{des:2} is always assumed in algebraic effects as a consequence of the algebraicity property.
Symmetry of the operators is also desirable but is incompatible with the Desiderata~\eqref{eqn:eckmann-hilton}.
Our graded approach to Knightian choice (\S\ref{sec:intro:grading}) connects to graded algebraic theories~\cite[\S3.1]{kura-graded-algebraic} which amounts to naming the non-deterministic binary operators and allows for symmetry of the operators up to regrading~\eqref{eqn:assocgraded}.

\subsection{Results about quotienting our theory}
\newcommand{\CP}{\mathrm{CP}}
\newcommand{\CPf}{\CP}
\label{sec:intro:results}
The names for the Knightian choices in our language appear to be additional intensional information, and the reader monad does not quotient this away. For this reason, we show two results about the equational theory.
First, we connect our approach to the convex powerset of distributions monad, showing that our bounds are tighter. Second, we show it is maximal --- no further quotient is possible.

\paragraph{Theorem~\ref{thm:unc-bounds} (\S\ref{sec:oplax-functor}): Improved bounds on uncertainty}
In our resulting language, every closed term describes a convex set of distributions. We thus establish a connection to the non-compositional approach that uses the Kleisli category of the convex powerset of distributions monad (e.g.~\cite{keimel-plotkin,mio-sarkis-vignudelli,bonchi-sokolova-vignudelli,bonchi-sokolova-vignudelli-B,mio-vignudelli,goy-petrisan,jacobs-2008}). We have an `op-lax' functor
\[
R:  \MainCategory \to \mathrm{Kl}(\CP).
\]
from our locally graded category $\MainCategory$ (\S\ref{sec:maincategory}) to the Kleisli category of the convex powerset of distributions monad~$\CP$. 
Being an op-lax functor means that
\[R(g\circ f)\subseteq R(g)\circ R(f),\]
i.e.~composition in our category gives a tighter bound on the Knightian uncertainty than the composition using the Kleisli category of the convex powerset of distributions monad (see also the Example in \S\ref{sec:oplax-example}). 

Note that this could not be a proper functor because we would then have a quotient theory in violation of the maximality theorem (Theorem~\ref{thm:maximality}). But an op-lax functor is beneficial as an interpretation of giving a tighter bound. 

\paragraph{Theorem~\ref{thm:maximality} (\S\ref{sec:max}): Maximality.}
Our language also gives rise to a compositional theory of equality. We prove our equational theory is maximal in that we can add no further equations on open terms without equating different convex sets of distributions or compromising the compositional structure. (See Theorem~\ref{thm:maximality} for a precise statement.)

\section{Markov categories and programming language syntax}
\label{sec:graded-markov-cats}
Markov categories have been proposed as a synthetic foundation of probability~\cite{fritz}, but they also form a basic framework for equational reasoning about probabilistic programming (e.g.~\cite{stein-staton,afkkmrsy}). As we show, distributive Markov categories precisely capture our Desiderata (\S\ref{sec:intro:desiderata}).
Markov categories are connected to monad-based approaches too, as we discuss in \S\ref{sec:markov-monad}.

We begin by recalling notions of Markov categories and how we may interpret programming language syntax in them (\S\ref{sec:markov-def}), and then generalize them
to the locally graded setting (\S\ref{sec:graded-markov}).
\newcommand{\Kl}{\mathrm{Kl}}
\newcommand{\CatA}{\mathcal{C}}
\newcommand{\CCat}{\mathbf{C}}
\newcommand{\DCat}{\mathbf{D}}
\newcommand{\VCat}{\mathcal{V}}
\newcommand{\GCat}{\mathbb{G}}
\newcommand{\JCat}{\mathcal{J}}
\newcommand{\Set}{\mathbf{Set}}
\newcommand{\FinStoch}{\mathbf{FinStoch}}
\newcommand{\FinSet}{\mathbf{FinSet}}
\newcommand{\FinStochSurj}{\FinStoch_{\mathrm{Surj}}}
\newcommand{\FinSetSurj}{\FinSet_{\mathrm{Surj}}}
\newcommand{\inv}{^{-1}}
\newcommand{\op}{^{\mathrm{op}}}
\begin{definition}
  A \emph{monoidal category} is a category $\CatA$ equipped with
  a functor $\otimes :\CatA\times \CatA\to \CatA$ and an object $I$
  together with associativity and unitor isomorphisms (e.g. $(X\otimes Y)\otimes Z\cong X\otimes (Y\otimes Z)$) that satisfy coherence conditions (e.g.~\cite{maclane}). It is strict if the isomorphisms are equalities. 
  A \emph{symmetric} monoidal category is moreover equipped with isomorphisms
  $\sigma_{X,Y}:X\otimes Y\cong Y\otimes X$ such that
  $\sigma_{Y,X}=\sigma_{X,Y}\inv$ and satisfying coherence conditions. 

  A \emph{semicartesian category} is a symmetric monoidal category in which the monoidal unit is a terminal object. That is, there is exactly one morphism $X\to I$ for all $X$. 
\end{definition}
A semicartesian category has projections $X\otimes Y\to X\otimes I\cong X$, but it is weaker than a full categorical product because there need not be a natural diagonal $X\to X\otimes X$. 
\subsection{Markov categories}
\label{sec:markov-def}
\newcommand{\copymap}{\mathsf{copy}}
\newcommand{\id}{\mathsf{id}}
\begin{definition}\cite{fritz}
  A \emph{Markov category} is a semicartesian category such that every object is equipped with a commutative comonoid structure, that is, a map
  $\copymap_X: X\to X\otimes X$ that is symmetric and associative and has the terminal map
  $X\to I$ as a unit.

  A morphism $f: X\to Y$ in a Markov category is \emph{deterministic} if it commutes with the copy map ($\copymap_Y\circ f=(f\otimes f)\circ \copymap_X$).

  A \emph{distributive} Markov category~\cite{afkkmrsy} is a Markov category that has coproducts such that the canonical maps $X\otimes Z+Y\otimes Z\to (X+Y)\otimes Z$ are isomorphisms and the coproduct injections $X\to X+Y\leftarrow Y$ are deterministic.\end{definition}
A typical example of a distributive Markov category is the category $\FinStoch$ of stochastic matrices (Def.~\ref{def:finstoch}).

An ordinary distributive category~\cite{cockett-distr,clw} is a distributive Markov category in which every morphism is deterministic. A typical example is the category $\FinSet$ of finite sets.

\paragraph{Programming syntax.} We can use programming language syntax for composition in a distributive Markov category (see also e.g.~\cite{stein-staton}). The objects of the category are
regarded as types, with \lstinline|Bool| regarded as the object $1+1$. If $\Gamma=(x_1:A_1)\otimes \dots \otimes (x_n:A_n)$ then a morphism $t:\Gamma \to B$ is regarded as a term
$\Gamma \vdash t :B$. We notate
\begin{align}
  &\mlstinline{(t,u)}&\text{for}&&& \Gamma\xrightarrow\copymap \Gamma\otimes \Gamma \xrightarrow {t\otimes u}
                                    A\otimes B
  \\
  &\mlstinline{x <- t ; u} &\text{for}&&& \Gamma \xrightarrow \copymap \Gamma\otimes \Gamma \xrightarrow{\Gamma \otimes t} \Gamma \otimes A\xrightarrow u B \label{eq:markov-let}
  \\
  &\mlstinline{if t then u else v}&\text{for}&&& \Gamma \xrightarrow \copymap \Gamma\otimes \Gamma
                                             \xrightarrow {t\otimes \Gamma} (1+1)\otimes \Gamma \cong \Gamma+\Gamma
                                             \xrightarrow {[u,v]} B
\end{align}
In this way, given interpretations of \lstinline|bernoulli| and \lstinline|knight| as morphisms
$1\to \mlstinline{Bool}$, we can interpret the programs from Examples~\ref{exA} and~\ref{exB}. 

\subsection{Graded Markov categories} \label{sec:graded-markov}
\begin{definition}\label{def:graded-markov}
  Let $\GCat$ be a semicartesian category. A \emph{graded distributive Markov category} $\CCat$ is given by
  \begin{itemize}
    \item a distributive Markov category $\CCat_I$, but moreover,
    \item for each pair of objects and each grade $\gamma\in \GCat$
      a set $\CCat_\gamma(X,Y)$ of morphisms, agreeing with $\CCat_I$ when $\gamma=I$;
    \item for each morphism $f:\varepsilon\to \gamma$ in $\GCat$, a function
      $\CCat_\gamma(X,Y)\to \CCat_\varepsilon(X,Y)$;
    \item a family of maps
      $\circ: \CCat_\gamma(X,Y)\times \CCat_\varepsilon(Y,Z)\to \CCat_{\gamma\otimes \varepsilon}(X,Z)$;
    \item a family
      $\otimes: \CCat_\gamma(X,X')\times \CCat_\varepsilon(Y,Y')\to
      \CCat_{\gamma\otimes \varepsilon}(X\otimes Y,X'\otimes Y')$
  \end{itemize}
  all such that composition is natural and associative up to the associativity of $\GCat$ (see e.g.~\cite[\S1.2]{wood-graded},~\cite{pbl-graded}, \cite[App.~B]{gavranovic-thesis}), monoidal product of morphisms is also natural and has associators and symmetric braidings up-to the structure of $\GCat$, and such that the induced function
    $\CCat_\gamma(X+Y,Z)\to \CCat_{\gamma}(X,Z)\times \CCat_{\gamma}(Y,Z)$
  is a bijection (e.g.~\cite[p.~36]{wood-graded}).
\end{definition}
See Proposition~\ref{prop:paragraded} for our example of a graded Markov category. We note that since $\GCat$ is semicartesian, there are canonical projections $\gamma\otimes \varepsilon\to \gamma$,
and so we can regard any morphism at grade $\gamma$ as a morphism at grade $(\gamma\otimes \varepsilon)$.

\newcommand\atg{\mathrel{{\&}}}
\subsection{Graded programming syntax}\label{sec:graded-pl} We propose to use programming language syntax for composition in a graded distributive Markov category, as in a distributive Markov category.
Our syntax and interpretation are similar to other languages for grades and effect systems, e.g.~\cite[\S2.3]{mcdermott-thesis}.
The objects of the category are again regarded as types
with \lstinline{Bool} and contexts $\Gamma$ regarded as before. Then a morphism at grade $\gamma$, e.g.~$t\in\CCat_\gamma(\Gamma ,B)$, is regarded as a term at grade $\gamma$, written~$\Gamma\vdash t: B \atg \gamma$.
We can build a simple internal language for the graded category using typing rules such as the following:
\newcommand{\regrade}[2]{\mathsf{c}_{#1}[#2]}
\[\begin{array}{l@{\qquad}l}
  \begin{prooftree}
\phantom{\Gamma \vdash u:A \atg g}   \justifies
    \Gamma, x:A \vdash x:A \atg I
  \end{prooftree}
  &
  \begin{prooftree}
    \Gamma \vdash u:A \atg \gamma
    \qquad
    \Gamma, x:A \vdash t:B \atg \varepsilon
    \justifies
    \Gamma \vdash \mlstinline{x <- u; t} :B \atg \gamma\otimes \varepsilon
  \end{prooftree}
    \\[24pt]
  \begin{prooftree}
    \Gamma \vdash t:A \atg \gamma  \justifies
    \Gamma\vdash \regrade f t:A \atg \varepsilon
    \using \mbox{\small $f:\varepsilon\to \gamma$}
  \end{prooftree}
    &\begin{prooftree}
    \Gamma \vdash b:\mlstinline{Bool} \atg \gamma
    \qquad
    \Gamma \vdash u:A \atg \varepsilon
    \qquad
    \Gamma \vdash t:A \atg \varepsilon
    \justifies
    \Gamma \vdash \mlstinline{if b then u else t} : A \atg \gamma\otimes \varepsilon
  \end{prooftree}\end{array}
  \hide{\qquad
  \begin{prooftree}
    \Gamma \vdash u:A \atg \gamma
    \qquad
    \Gamma \vdash t:B \atg \varepsilon
    \justifies
    \Gamma \vdash (u,t): A\otimes B \atg \gamma\otimes \varepsilon
  \end{prooftree}}
\]
The notation $\regrade f t$ is a regrading or coercion, corresponding to the functorial action on hom-sets
$\CCat_{\gamma}(\Gamma, A)\to \CCat_{\varepsilon}(\Gamma, A)$ in the graded Markov category. 

There are also typing rules for product types and sums more generally, but since we won't need them
in examples, we omit them here.
We do use a three-element type in the illustrations, $\mlstinline{Three}=1+1+1$, with constructors
\[
    \Gamma \vdash \rx: \mlstinline{Three} \atg I
    \qquad
    \Gamma \vdash \gx: \mlstinline{Three} \atg I
    \qquad
    \Gamma \vdash \bx: \mlstinline{Three} \atg I
  \]
For the language of imprecise probability with named Knightian choices, we suppose that there
is a grade $\widehat a\in \GCat$ that represents the grading by a single name $a$.
We then consider the following commands:
\[
    \Gamma \vdash \mlstinline{bernoulli} :\mlstinline{Bool} \atg I
  \qquad
    \Gamma \vdash \mlstinline{knight(a)} :\mlstinline{Bool} \atg \widehat{a}
\]
These will be interpreted via specific morphisms
in $\CCat_I(I,\mlstinline{Bool})$ and $\CCat_{\widehat a}(I,\mlstinline{Bool})$ respectively. 

\subsubsection{Example and regrading by coherence isomorphisms}
We then have the following derivable type, for a program from \S\ref{sec:intro:names}:
\begin{align*}
  \Gamma\vdash&\ 
                \mlstinline{x <- knight(a1) ; y <- knight(a2) ; z <- bernoulli ;}
  \\&\ \mlstinline{if z then (if x then rx else gx) else (if y then rx else bx)}
\   : \mlstinline{Three}\atg (\widehat{a_1}\otimes \widehat{a_2}\otimes I\otimes I\otimes I)
\end{align*}
We can regrade to $\widehat{a_1}\otimes \widehat{a_2}$ via the
coherence isomorphism $(\widehat{a_1}\otimes \widehat{a_2}) \cong (\widehat{a_1}\otimes \widehat{a_2}\otimes I\otimes I\otimes I)$ in $\GCat$.

In some situations, $\GCat$ can be chosen so that these coherence isomorphisms are identity maps so this trivial regrading can be omitted. 
For example, in many situations in the literature, $\GCat$ is a partially ordered monoid.
In \S\ref{sec:choices-products} we suggest choices for $\GCat$ that avoid coherence regradings. 

\subsubsection{Equational reasoning}
\hide{\[
  \begin{prooftree}
    \Gamma \vdash b \equiv b':\mlstinline{bool} \atg \gamma
    \qquad
    \Gamma \vdash u \equiv u':A \atg \varepsilon
    \qquad
    \Gamma \vdash t\equiv t':A \atg \varepsilon
    \justifies
    \Gamma \vdash \mlstinline{if b then u else t} \equiv \mlstinline{if b' then u' else t'} : A \atg \gamma\otimes \varepsilon
  \end{prooftree}
\]

\[
  \begin{prooftree}
    \Gamma \vdash u \equiv u':A \atg \gamma
    \qquad
    \Gamma, x:A \vdash t\equiv t':B \atg \varepsilon
    \justifies
    \Gamma \vdash \mlstinline{x <- u; t} \equiv \mlstinline{x <- u'; t'} :B \atg \gamma\otimes \varepsilon
  \end{prooftree}
  \qquad
  \begin{prooftree}
    \Gamma \vdash u \equiv u':A \atg \gamma
    \qquad
    \Gamma \vdash t \equiv t':B \atg \varepsilon
    \justifies
    \Gamma \vdash (u,t) = (u',t'): A\otimes B \atg \gamma\otimes \varepsilon
  \end{prooftree}
\]}
We have the following equational reasoning principles. The first is associativity of composition,
familiar from the monadic metalanguage~\cite{moggi-computational-lambda,moggi:computation_and_monads}.
The second and third are commutativity and affine laws (Desiderata~1), and finally the hoisting law (Desiderata~2).
Since we write `$\Gamma\vdash t:A\atg \gamma$' for a morphism $t\in\CCat_\gamma(\Gamma, A)$,
we also write `$\Gamma\vdash t\equiv u:A\atg \gamma$' to indicate that the morphisms $t$ and $u$ are equal. 
\begin{proposition}\label{eqn:graded-pl-sound}In any graded distributive Markov category, the following equational reasoning principles are valid:
  \begin{enumerate}
  \item (Associativity)\\
    \begin{tabular}{@{}ll}If &$\Gamma \vdash t :A \atg \gamma
    \qquad
    \Gamma, x:A \vdash u: B \atg \varepsilon
    \qquad
    \Gamma, y:B \vdash v: C \atg \zeta$
\\then &$
    \Gamma \vdash \mlstinline{x <- t ; y <- u ; v} \ \  \equiv\ \ \regrade{\alpha}{\mlstinline{y <- (x <- t ; u) ; v}} : C \atg \gamma\otimes (\varepsilon\otimes \zeta)$\end{tabular}\\
  where $\alpha: \gamma\otimes (\varepsilon\otimes \zeta)\to (\gamma\otimes \varepsilon)\otimes \zeta$ is the associativity coherence isomorphism in $\GCat$. 
  \item (Commutativity)\\
    \begin{tabular}{@{}ll}
      If &$   \Gamma \vdash t :A \atg \gamma
    \qquad
    \Gamma \vdash u: B \atg \varepsilon
    \qquad
    \Gamma,x:A,y:B \vdash u: C \atg \zeta
$
  \\then &$
  \Gamma \vdash \regrade {\alpha\inv} {\mlstinline{x <- t; y <- u; v}} \ $\\
      &$\phantom{\Gamma}\ \equiv\ 
  \regrade  {(\sigma\otimes \zeta)\circ \alpha\inv}{\mlstinline{y <- u; x <- t; v}} : C \atg (\gamma\otimes \varepsilon)\otimes \zeta
$\end{tabular}\\
where $\sigma:\gamma\otimes \varepsilon\to \varepsilon\otimes \gamma$ is the symmetry of the monoidal structure in $\GCat$.
  \item (Weakening)\\\begin{tabular}{@{}ll}
    If &$    \Gamma \vdash t :A \atg \gamma
    \qquad
    \Gamma \vdash u: B \atg \varepsilon
  $\\
  then &
  $
    \Gamma \vdash \mlstinline{x <- t ; u} \ \equiv\  \regrade {\pi_2} u : B \atg \gamma\otimes \varepsilon
$\end{tabular}\\
where $x\not\in\Gamma$, and $\pi_2:\gamma\otimes \varepsilon \to \varepsilon$ is the canonical projection map from the semicartesian structure of $\GCat$.
  \item (Hoisting)
    \\\begin{tabular}{@{}ll}
    If &$    \Gamma \vdash b :\mlstinline{Bool} \atg \gamma
    \qquad
    \Gamma \vdash t :A \atg \varepsilon
    \qquad
    \Gamma,x:A \vdash u: B \atg \zeta
    \qquad
    \Gamma,x:A \vdash v: B \atg \zeta
  $\\
  then &
  $
  \Gamma \vdash \regrade {\alpha\inv}{\mlstinline{if b then x <- t ; u else x <- t ; v}} \ $\\
        &$\phantom{\Gamma}\ \equiv\ 
    \regrade {(\sigma\otimes \zeta)\circ \alpha\inv} {\mlstinline{x <- t ; if b then u else v}} : B \atg (\gamma\otimes \varepsilon)\otimes \zeta
$\end{tabular}\\
\end{enumerate}
\end{proposition}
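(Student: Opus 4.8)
The plan is to reduce every one of the four equations to an identity of (graded) morphisms in $\CCat$ by unfolding the syntactic sugar of \S\ref{sec:markov-def}--\S\ref{sec:graded-pl}. Recall that \mlstinline{x <- t ; u} abbreviates the composite $\Gamma \xrightarrow{\copymap} \Gamma\otimes\Gamma \xrightarrow{\Gamma\otimes t}\Gamma\otimes A\xrightarrow{u} B$ and that \mlstinline{if b then u else v} abbreviates $\Gamma\xrightarrow{\copymap}\Gamma\otimes\Gamma\xrightarrow{b\otimes\Gamma}(1+1)\otimes\Gamma\cong\Gamma+\Gamma\xrightarrow{[u,v]}B$, now read as graded morphisms whose grades accumulate under $\otimes$ in $\GCat$. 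Each law then splits into two parts: an \emph{underlying} morphism identity, provable from the comonoid, semicartesian, and distributive axioms exactly as in the ungraded monadic metalanguage; and a \emph{grade} identity, which is precisely the coercion $\regrade{f}{-}$ appearing in the statement, witnessed by a coherence morphism of $\GCat$ through the naturality and coherence clauses of Definition~\ref{def:graded-markov}.

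For the first three laws the underlying identities are standard. \emph{Associativity} is coassociativity of $\copymap$ together with naturality of $\otimes$, which makes the two nestings of \mlstinline{<-} agree; on grades the two bracketings $\gamma\otimes(\varepsilon\otimes\zeta)$ and $(\gamma\otimes\varepsilon)\otimes\zeta$ are identified by the associator, giving exactly $\regrade{\alpha}{-}$. \emph{Commutativity} uses cocommutativity of $\copymap$ and naturality of the braiding in $\CCat$ to swap the independent computations $t$ and $u$ — their independence is already encoded in the typing, since both are derived in the common context $\Gamma$ — and the corresponding transposition of grade factors is realised by the symmetry $\sigma$ of $\GCat$, yielding $(\sigma\otimes\zeta)\circ\alpha^{-1}$. \emph{Weakening} uses the counit law of the comonoid and terminality of $I$ to discard the result of $t$ (legitimate because $x$ is not free in $u$), while the grade collapse $\gamma\otimes\varepsilon\to\varepsilon$ is the canonical projection $\pi_2$.

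The most involved case is \emph{hoisting}, where the underlying identity genuinely invokes distributivity: pushing \mlstinline{x <- t} outside the two branches amounts to commuting the distributor $\Gamma\otimes(1+1)\cong\Gamma+\Gamma$ and the copairing $[-,-]$ past the composite defining \mlstinline{x <- t}, using naturality of the distributors and determinism of the coproduct injections. The grade side mirrors commutativity, transposing the grade $\gamma$ of the guard $b$ past $\varepsilon$ and producing the same coercion $(\sigma\otimes\zeta)\circ\alpha^{-1}$.

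I expect the main obstacle to be the grade bookkeeping rather than the diagrammatics. Because a regrading $\regrade{f}{-}$ is a functorial action on hom-sets and need not preserve the underlying morphism, one cannot simply forget grades and quote the ungraded argument; instead each underlying rewrite must be transported through the stated coercions, and one must verify that those coercions are both \emph{forced} by the requirement that the two sides live at the same grade and \emph{compatible} with the naturality-of-composition, naturality-of-$\otimes$, and associator/braiding clauses of Definition~\ref{def:graded-markov}. The key fact to isolate is therefore that the coherence morphisms of $\GCat$ interact with the comonoid and distributive structure of $\CCat$ exactly so that the underlying and grade identities assemble into a single equality of graded morphisms; once this is checked, each of the four laws follows by unfolding and rewriting.
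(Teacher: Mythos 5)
Your proposal is correct and takes essentially the same approach as the paper, whose proof is given only as a note: ``by expanding the definitions of the syntactic notation in terms of composition in graded distributive Markov categories.'' Your sketch fills in exactly that expansion --- the comonoid laws for the first three principles, distributivity for hoisting, and the $\GCat$-coherence bookkeeping for the regradings --- so there is nothing to flag.
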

\begin{proof}[Proof note]
  By expanding the definitions of the syntactic notation in terms of composition in graded distributive Markov categories.
\end{proof}
When the coherence isomorphisms $c_f$ are identity morphisms, the regradings can be omitted.
We give candidates for $\GCat$ that allow for this simplification in \S\ref{sec:choices-products}.
The weakening regrading ($\pi_2$) is also typically obvious and so we sometimes elide it. 
We leave a practical coercion inference algorithm to future work. 

Figure~\ref{fig:derivation2} illustrates a program equational derivation in this setting, which is similar to the one in Figure~\ref{fig:derivation}, but now formally grounded.

These are not the only laws: there are also equational laws such as reflexivity, symmetry, transitivity, congruence, unit laws for the sequencing construction, and beta and eta laws for sum and product types.  As we will see in \S\ref{sec:markov-monad},
graded Markov categories can be understood in terms of enriched monads, and so we have the full power of an enriched version of the monadic metalanguage~\cite{moggi-computational-lambda,moggi:computation_and_monads}. There are no nuances involved in specifying the other laws, beyond the regrading coercions, so for the sake of brevity, we do not write them all out here.

\begin{figure}
\begin{lstlisting}[frame=single,basicstyle=\small\sffamily]
   z <- bernoulli ; if z then (x <- knight(a1) ; if x then rx else gx)
                         else (x <- knight(a2) ; if x then rx else bx)
= -- Alpha renaming
   z <- bernoulli ; if z then (x <- knight(a1) ; if x then rx else gx)
                         else (y <- knight(a2) ; if y then rx else bx)
= -- Weakening
   z <- bernoulli ; if z then (x <- knight(a1) ; y <- knight(a2) ; if x then rx else gx)
                         else (x <- knight(a1) ; y <- knight(a2) ; if y then rx else bx)
= -- Hoisting
   z <- bernoulli ; x <- knight(a1) ; y <- knight(a2) ;
   if z then (if x then rx else gx) else (if y then rx else bx)
= -- Commutativity
   x <- knight(a1) ; y <- knight(a2) ; z <- bernoulli ;
   if z then (if x then rx else gx) else (if y then rx else bx)
\end{lstlisting}
\caption{A formal equational derivation of two programs with named Knightian choices from \S\ref{sec:intro:names} using the principles of Proposition~\ref{eqn:graded-pl-sound}.\label{fig:derivation2}}
\end{figure}

\section{Relating graded Markov categories and graded monads}
\label{sec:markov-monad}

In \S\ref{sec:graded-markov-cats}, we looked at categorical probability and probabilistic programming from the point of view of Markov categories. Monads provide a different view on this. Because this angle will be more familiar to some in the community, we now recall graded and relative monads (\S\ref{sec:monad-def}) and illustrate their correspondence with Markov categories (Prop.~\ref{prop:monad-markov}, \ref{prop:markov-monad}) and we relate them to notions from enriched
category theory (\S\ref{sec:enriched-def}). For brevity, in this section, we focus on definitions and in the next section (\S\ref{sec:our-category}) we focus on examples,
rather than interleaving them.

\subsection{Monads and graded relative affine monads}
\label{sec:monad-def}

It is well-established that notions of computation can be modelled by monads~\cite{moggi:computation_and_monads}, including probabilistic and non-deterministic computation (already in~\cite{moggi-computational-lambda,jones_probabilistic_1989}).
In this section, we recall the flavours of monads relevant to this work.

The emerging view is that \emph{commutative affine monads} form abstract accounts of probability and non-determinism (e.g.~\cite{jacobs-commutative-effectus,bart-prob-book,kock:commutative-monads-as-a-theory-of-distributions,lazyppl}). Meanwhile, \emph{relative monads} restrict the domain of first-order computation (e.g.~\cite{asada-arrows}),
and \emph{graded monads} classify the side-effects associated with a program (e.g.~\cite{orchard-wadler-eades,katsumata-param-monads}).
We present these concepts in the Kleisli triple setting because this is more conducive to their use in programming languages.
\begin{definition}
  A \emph{strong monad}~\cite{kockStrongFunctorsMonoidal1972,moggi:computation_and_monads} over a cartesian closed category $\CCat$ is for each $X\in\CCat$ an object $T(X)\in\CCat$ and a morphism
  \[
    \eta_X : 1_\CCat \to [X, T(X)],
  \]
  and a family of morphisms
  \[
    (-)^* : {[X, T(Y)]} \to {[T(X), T(Y)]}
  \] in $\CCat$ such that for generalised elements $f$ and $g$ of $[X, T(Y)]$ and $[Y, T(Z)]$, the following equations hold:
  \begin{equation}
    \begin{aligned} \label{eq:monad_laws}
      f &= f^*\circ \eta_X \\
      id_{T(X)} &= (\eta_X)^* \\
      g^* \circ f^* &= (g^* \circ f)^*.
    \end{aligned}
  \end{equation}
  The left-strength $s: X\times T(Y) \to T(X\times Y)$ is induced by the canonical action $(\eta_Y \circ {-})^* : [X,Y] \to [T(X), T(Y)]$ by
  \[
    X\times T(Y) \to [Y, X\times Y]\times T(Y) \to [T(Y), T(X\times Y)]\times T(Y) \to T(X\times Y),
  \]
  where the first and last arrows use the unit and the counit of the closed structure, respectively.

  The strong monad is \emph{commutative}  if the following diagram commutes, where $\hat{s}$ is the induced right-strength from the symmetry of the cartesian product.
  \[
  \begin{tikzpicture}[commutative diagrams/every diagram]
    \node (P0) at (90:1.7cm) {$T(X)\times T(Y)$};
    \node (P1) at (90+90:2cm) {$T(X\times T(Y))$} ;
    \node (P2) at (90+2*90:1.7cm) {$T(X\times Y)$};
    \node (P3) at (90+3*90:2cm) {$T(T(X)\times Y)$};
    \path[commutative diagrams/.cd, every arrow, every label]
    (P0) edge node[swap] {$\hat{s}$} (P1)
    (P1) edge node[swap] {$s^*$} (P2)
    (P3) edge node {$\hat{s}^*$} (P2)
    (P0) edge node {$s$} (P3);
  \end{tikzpicture}
  \]
  The strong monad is \emph{affine} (e.g.~\cite{kock-affine,jacobs-weakening} if the unique map $T(1)\to 1$ is an isomorphism. A typical example is $\CCat=\Set$, and $T=D$ is the finite probability distribution monad (e.g.~\cite{bart-prob-book}). 

  Let $\JCat$ be a category with finite products and consider a finite product preserving functor $J:\JCat\to \CCat$.
  A \emph{relative strong monad}~\cite{relative-monads,uustalu-strong-relative} $T$ on $J$ is a functor $T: \JCat \to \CCat$, along with a $J$-relative unit
  \[
    \eta_X: I \to [J(X), T(X)]
  \]
  natural in $X\in\JCat$, and a family of $J$-relative Kleisli extensions
  \[
    ({-})^*: [J(X),T(Y)] \to [T(X), T(Y)]
  \]
  natural in $X,Y\in \JCat$, and such that \eqref{eq:monad_laws} holds for $f$ and $g$ generalised elements of $[J(X), T(Y)]$ and $[J(Y), T(Z)]$.
  A typical example is $\CCat=\Set$, and $\JCat=\FinSet$, with $J$ the evident embedding, and $T=DJ$. 

  Let $(\GCat, \otimes, I)$ be a monoidal category.
  A \emph{graded strong monad} (e.g.~\cite{katsumata-param-monads}) is a functor $T: \GCat \to [\CCat,\CCat]$, with  unit
  \[
    \eta_X: I \to [X, T_I(X)]
  \]
  natural in $X\in\CCat$, and a family of Kleisli extensions
  \[
    ({-})^*_{\gamma,\varepsilon}: [X,T_\varepsilon(Y)] \to [T_{\gamma}(X), T_{\gamma\otimes \varepsilon}(Y)]
  \]
  natural in $X,Y\in \CCat$, and such that for $f$ and $g$ generalised elements of $[X, T_\gamma(Y)]$ and $[Y, T_\varepsilon(Z)]$, the following equations hold:
  \begin{equation} \label{eq:graded_monad_laws}
    \begin{aligned}
      f &= T_\lambda \circ (f)^*_{I,\gamma} \circ \eta_{X}, \\
      id_{T_\varepsilon(X)} &= T_\rho \circ (\eta_X)^*_{\varepsilon,I}, \\
      (g)^*_{\zeta\otimes \gamma, \varepsilon}\circ (f)^*_{\zeta, \gamma} &= T_\alpha \circ ((g)^*_{\gamma, \varepsilon}\circ f)^*_{\zeta, \gamma\otimes \varepsilon},
    \end{aligned}
  \end{equation}
  where $\lambda: I\otimes \gamma \to \gamma$, $\rho: \varepsilon\otimes I \to \varepsilon$, and $\alpha: \zeta\otimes (\gamma\otimes \varepsilon) \to (\zeta\otimes \gamma) \otimes \varepsilon$
  are the left unitor, right unitor, and associator of $\GCat$, respectively.

  A \emph{graded relative strong monad} $T$ on $J$ is a functor $T: \GCat \to [\JCat,\CCat]$, along with a $J$-relative unit
  \[
    \eta_X: I_\CCat \to [J(X), T_I(X)]
  \]
  natural in $X\in\JCat$, and a family of $J$-relative Kleisli extensions
  \[
    ({-})^*_{\gamma,\varepsilon}: [J(X),T_\varepsilon(Y)] \to [T_{\gamma}(X), T_{\gamma\otimes \varepsilon}(Y)]
  \]
  natural in $X,Y\in \JCat$, and such that \eqref{eq:graded_monad_laws} holds for $f$ and $g$ generalised elements of $[J(X), T_\gamma(Y)]$ and $[J(Y), T_\varepsilon(Z)]$.
  The graded left-strength $s_\gamma: X\times T_\gamma(Y) \to T_\gamma(X\times Y)$ is similarly induced by the action $T_\rho \circ (\eta_Y \circ {-})^*_{\gamma,I} : [X,Y] \to [T_\gamma(X), T_\gamma(Y)]$.
  The monad is \emph{commutative} if $\GCat$ is symmetric monoidal and the following diagram commutes,
  where $\sigma: \gamma\otimes \varepsilon \to \varepsilon\otimes \gamma$ is the symmetric coherence isomorphism of $\GCat$.
  % https://q.uiver.app/#q=WzAsNSxbMiwxLCJUX2IoVF9hKFgpXFx0aW1lcyBZKSJdLFswLDEsIlRfYShYXFx0aW1lcyBUX2IoWSkpIl0sWzEsMCwiVF9hKFgpXFx0aW1lcyBUX2IoWSkiXSxbMCwyLCJUX3thXFxvdGltZXMgYn0oWFxcdGltZXMgWSkiXSxbMiwyLCJUX3tiXFxvdGltZXMgYX0oWFxcdGltZXMgWSkiXSxbMyw0LCJUX1xcc2lnbWEiXSxbMiwwLCJ7dH1fYiIsMl0sWzAsNCwiKFxcaGF0e3R9X2EpXipfe2IsYX0iLDJdLFsxLDMsIih0X2IpXipfe2EsYn0iXSxbMiwxLCJcXGhhdHt0fV9hIl1d
  \[
  \begin{tikzpicture}[commutative diagrams/every diagram]
    \node (P0) at (90:2.3cm) {$T_\gamma(X)\times T_\varepsilon(Y)$};
    \node (P1) at (90+72+4:2.3cm) {$T_\gamma(X\times T_\varepsilon(Y))$} ;
    \node (P2) at (90+2*72:2cm) {\makebox[5ex][r]{$T_{\gamma\otimes \varepsilon}(X\times Y)$}};
    \node (P3) at (90+3*72:2cm) {\makebox[5ex][l]{$T_{\varepsilon\otimes \gamma}(X\times Y)$}};
    \node (P4) at (90+4*72-4:2.3cm) {$T_\varepsilon(T_\gamma(X)\times Y)$};
    \path[commutative diagrams/.cd, every arrow, every label]
    (P0) edge node[swap] {$\hat{s}_\gamma$} (P1)
    (P1) edge node[swap] {$(s_\varepsilon)^*_{\gamma,\varepsilon}$} (P2)
    (P2) edge node {$T_\sigma$} (P3)
    (P4) edge node {$(\hat{s}_\gamma)^*_{\varepsilon,\gamma}$} (P3)
    (P0) edge node {$s_\varepsilon$} (P4);
  \end{tikzpicture}
  \]
  It is \emph{affine} if the unique map $T_I(1)\to 1$ is an isomorphism.
\end{definition}
\begin{proposition}\label{prop:monad-markov}
  If $T$ is a graded commutative affine relative monad on a distributive category, then its Kleisli category (e.g.~\cite{kura-graded-algebraic,gkos-graded-hoare}) is a graded Markov category:
  \begin{itemize}
  \item The objects are the same as $\JCat$;
  \item The morphisms in $\Kl(T)_\gamma(X,Y)$ are the morphisms $J(X)\to T_\gamma(X)$ in $\CCat$;
  \item Composition is via the Kleisli extension.\end{itemize}
\end{proposition}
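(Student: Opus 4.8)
The plan is to verify, clause by clause, that the data assembled from $T$ meets Definition~\ref{def:graded-markov}, reducing each graded-Markov axiom either to one of the graded relative monad laws \eqref{eq:graded_monad_laws}, to commutativity/affineness of $T$, or to structure inherited from the distributive base $\JCat$ through the finite-product-preserving functor $J$. Concretely, set $\Kl(T)_\gamma(X,Y)=\CCat(J(X),T_\gamma(Y))$ (reading the evident typo in the statement as $T_\gamma(Y)$); define composition $g\circ f$ as the graded Kleisli extension $g^*_{\gamma,\varepsilon}\circ f$, which lands in $\CCat(J(X),T_{\gamma\otimes\varepsilon}(Z))$ exactly as required; and let a grade morphism act by (pre/post)composition with the corresponding component of $T$, supplying the regrading functions $\Kl(T)_\gamma(X,Y)\to\Kl(T)_\varepsilon(X,Y)$. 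Here one must reconcile the variance conventions, i.e.\ check that the functorial action of $T$ on $\GCat$ points the regrading in the direction demanded by Definition~\ref{def:graded-markov} (in the motivating reader example the grade-morphisms are maps of choice spaces and $T$ acts on them by precomposition, so some care with $\GCat$ versus $\GCat\op$ is needed).

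I would first treat the base case $\gamma=I$ and show $\Kl(T)_I$ is a distributive Markov category. Semicartesianness is where affineness enters: the isomorphism $T_I(1)\cong 1$ makes the monoidal unit terminal, so the discard maps $X\to I$ are unique. The symmetric monoidal structure on the Kleisli category is produced from the graded strengths $s_\gamma,\hat s_\gamma$ exactly as in the classical statement that the Kleisli category of a commutative monad is symmetric monoidal; the agreement of the two ways of tensoring a pair of Kleisli maps is precisely the displayed commutativity square in the definition of a graded relative strong monad. The copy maps $\copymap_X$ and the commutative comonoid laws are transported from the diagonals of $\JCat$ (which has finite products) along $J$ and $\eta$, using that $J$ preserves products and that $\eta$ and $(-)^*$ are natural.

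With the base in hand, the graded layer is largely bookkeeping. Associativity and the two unit laws of composition are literally the three equations of \eqref{eq:graded_monad_laws}, with the coercions $T_\alpha$, $T_\lambda$, $T_\rho$ realising the ``up to the associativity and unitors of $\GCat$'' clause. For the monoidal product on morphisms I would define $f\otimes g$ via the double graded strength and verify naturality together with the associator and symmetric-braiding coherences up to $\GCat$; these reduce again to the commutativity diagram and the monad laws. The regrading functions are functorial because $T$ is a functor out of $\GCat$, and they are compatible with $\circ$ and $\otimes$ by naturality of $\eta$ and of the Kleisli extension in the grade.

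The remaining clause---that the canonical map $\Kl(T)_\gamma(X+Y,Z)\to\Kl(T)_\gamma(X,Z)\times\Kl(T)_\gamma(Y,Z)$ is a bijection, together with the distributivity isomorphisms and determinism of the injections---is where I expect the real work. It amounts to showing that the coproducts of the distributive category $\JCat$ remain coproducts after applying $J$ and passing to the Kleisli hom-sets, i.e.\ that $\CCat(J(X+Y),T_\gamma Z)\cong \CCat(J(X),T_\gamma Z)\times\CCat(J(Y),T_\gamma Z)$ naturally in the grade. This is immediate once the composite $\CCat(J(-),T_\gamma Z)$ carries $\JCat$-coproducts to products, as it does in the motivating example $\JCat=\FinSet\hookrightarrow\Set=\CCat$; the distributivity isomorphisms and the determinism of $X\to X+Y$ are then inherited from $\JCat$ and preserved because the injections lie in the image of $J$ followed by $\eta$. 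The main obstacle is thus less any single computation than (i)~checking that the symmetric monoidal Markov structure assembled from the strengths is coherent at every grade with coherence cells living in $\GCat$, and (ii)~ensuring the coproduct/distributivity clause genuinely follows from the stated hypotheses on $J$ rather than silently requiring $J$ to preserve finite coproducts as an extra assumption.
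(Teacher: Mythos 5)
Your proposal is correct and follows essentially the same route as the paper, whose proof note simply says the argument ``amounts to expanding the definitions'' (pointing to Power's construction and to the non-graded case in \cite[Prop.~13]{afkkmrsy}); your clause-by-clause verification, including reading the hom-set as $\CCat(J(X),T_\gamma(Y))$ and deriving semicartesianness from affineness, the monoidal structure from commutativity, and the graded composition laws from \eqref{eq:graded_monad_laws}, is exactly that expansion. Your concern (ii) is well taken: the bijection $\CCat(J(X+Y),T_\gamma Z)\cong\CCat(J(X),T_\gamma Z)\times\CCat(J(Y),T_\gamma Z)$ does require $J$ to carry the coproducts of the distributive category $\JCat$ to objects with the corresponding universal property against each $T_\gamma Z$ --- an implicit hypothesis that holds in the paper's intended instances ($\FinSet\hookrightarrow\Set$ and the Yoneda embedding of Proposition~\ref{prop:markov-monad}) but is not literally part of the stated definition of a finite-product-preserving $J$.
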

\begin{proposition}\label{prop:markov-monad}
  Any graded Markov category induces a graded commutative affine relative monad, by
  \begin{itemize}
  \item $\mathcal J=\CCat_{I,\mathrm{det}}$, the distributive category of $I$-graded deterministic maps
  \item The underlying category is $\mathrm{FP}(\mathcal J\op,\Set)$, the finite product-preserving
    contravariant presheaves on $\mathcal {J}$; 
  \item $J: \mathcal J\to \mathrm{FP}(\mathcal J\op,\Set)$ is the Yoneda embedding;
  \item $T: \GCat\to [\mathcal J,\mathrm{FP}(\mathcal J\op,\Set)]$ is given by
    \[T_\gamma(Y)(X)=\CCat_\gamma(X,Y)\]
  \end{itemize}
\end{proposition}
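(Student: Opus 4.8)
The plan is to recognise this as the converse of Proposition~\ref{prop:monad-markov} and to realise $T$ as a restricted-Yoneda (nerve) construction, so that the relative monad structure becomes literally the composition structure of $\CCat$, read off through the Yoneda lemma. First I would set up the base and ambient categories. I would check that $\JCat=\CCat_{I,\mathrm{det}}$ is an ordinary distributive category: the grade-$I$ deterministic maps contain the identities and the coproduct injections and are closed under composition, the monoidal product $\otimes$ restricts to a genuine cartesian product on them (the standard Markov fact that $\copymap$ supplies diagonals and the terminal unit $I$ supplies projections), and the coproducts with distributivity are inherited from $\CCat_I$. Consequently the Yoneda embedding $J$, with $J(X)=\JCat(-,X)$, preserves finite products and each representable is a finite-product-preserving presheaf, so $J$ indeed lands in $\mathrm{FP}(\JCat\op,\Set)$.

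Next I would define $T$ and identify its Kleisli maps. Set $T_\gamma(Y)=\CCat_\gamma(-,Y)$, contravariant in the argument by precomposition with deterministic maps and functorial in the grade $\gamma\in\GCat$ via the regrading action of Definition~\ref{def:graded-markov} (one must track the variance of the $\GCat$-action here). The final clause of that definition, that $\CCat_\gamma(X+Y,Z)\to\CCat_\gamma(X,Z)\times\CCat_\gamma(Y,Z)$ is a bijection, is exactly the assertion that every $T_\gamma(Y)$ lies in $\mathrm{FP}(\JCat\op,\Set)$. By the Yoneda lemma the generalised elements of $[J(X),T_\gamma(Y)]$ are precisely the morphisms of $\CCat_\gamma(X,Y)$; under this identification the $J$-relative unit $\eta_X$ is the inclusion of deterministic maps $\JCat(-,X)\hookrightarrow\CCat_I(-,X)$ (picking out $\id_X$), and the Kleisli extension $(-)^*$ is post-composition in $\CCat$, with grades multiplying as $\gamma\otimes\varepsilon$.

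It then remains to verify the axioms, where everything should collapse to facts about $\CCat$. The three graded monad laws of \eqref{eq:graded_monad_laws} reduce to the left-unit, right-unit and associativity laws of composition in $\CCat$, with the coherences $\lambda,\rho,\alpha$ of $\GCat$ matching the regradings the graded Markov structure already carries. The strength $s_\gamma$ is the induced one, and commutativity of the monad follows from the symmetric-monoidal naturality of $\otimes$ together with the cocommutativity of $\copymap$ in $\CCat$; it is the monad-level shadow of the commutativity law of Proposition~\ref{eqn:graded-pl-sound}, with the grade swap realised by $T_\sigma$. Affineness is immediate: since $I$ is terminal in the semicartesian $\CCat_I$, the presheaf $T_I(1)=\CCat_I(-,1)$ is the terminal object of $\mathrm{FP}(\JCat\op,\Set)$, so $T_I(1)\to 1$ is an isomorphism.

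The conceptual content is thus just Yoneda plus composition, and I expect the main obstacle to be coherence bookkeeping rather than mathematical depth: correctly matching the associator, unitor and symmetry coherences of $\GCat$ against the regrading coercions, and verifying the commutativity hexagon through the induced (double) strength. A related technical point to discharge is that $\mathrm{FP}(\JCat\op,\Set)$ supports the internal homs used in the relative-monad definition. For the homs that actually occur this is fine: $[J(X),T_\gamma(Y)]$ is computed by Yoneda as $Z\mapsto\CCat_\gamma(X\times Z,Y)$, which is again product-preserving precisely because $\JCat$ is distributive, so the relative-monad machinery applies without needing full cartesian closure of the ambient category.
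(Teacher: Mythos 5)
Your proposal is correct and matches the paper's approach: the paper's own ``proof'' is only a one-line note that the result follows by expanding the definitions (citing the analogous constructions in Power and in \cite{afkkmrsy}), and your argument is a faithful elaboration of exactly that expansion --- restricted Yoneda, unit as identity, Kleisli extension as graded composition, with the distributivity clause of Definition~\ref{def:graded-markov} supplying product-preservation of the presheaves. The points you flag (variance of the $\GCat$-action and existence of the needed internal homs) are genuine bookkeeping issues the paper silently elides, and you resolve them correctly.
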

\begin{proof}[Proof note] In both cases, the proof amounts to expanding the definitions. The constructions are similar to \cite[\S7]{power-universal}. See also \cite[Prop.~13]{afkkmrsy} for the non-graded case.\end{proof}
(We conjecture that Propositions~\ref{prop:monad-markov}--\ref{prop:markov-monad} are part of a biequivalence between graded distributive Markov categories and commutative affine graded relative monads.
We do not pursue this here because we will not need the generality of the biequivalence in what follows.)
\subsection{Connection with enriched categories}
\label{sec:enriched-def}
To show that the concepts in this section are canonical, we connect with the theory of enriched categories.
Let $\VCat$ be a symmetric monoidal closed category with limits and colimits, that is moreover semicartesian. 
Recall (e.g.~\cite{kelly}) that a $\VCat$-enriched category $\CCat$ is given by a collection of objects,
and for each pair of objects $X,Y$ of $\CCat$, a `hom-object' $\CCat(X,Y)$ in $\VCat$.
Composition is a morphism $\CCat(X,Y)\otimes \CCat(Y,Z)\to \CCat(X,Z)$ in $\VCat$.
We can also define $\VCat$-enriched monoidal categories, by requiring the functor $\otimes:\CCat\times \CCat\to \CCat$ to be $\VCat$-enriched.
And $\VCat$-enriched coproducts require a natural isomorphism
\[
  \CCat(X_1+\dots+X_n,Y)\cong   \CCat(X_1,Y)\times \dots \times \CCat(X_n,Y)
\]
between objects of $\VCat$.
Any enriched category has an underlying \emph{ordinary category} $\CCat_0$, which has the same objects but with a hom-\emph{set} given by $\CCat_0(X, Y)=\VCat(I,\CCat(X, Y))$. This ordinary category inherits monoidal, limit and colimit structure from $\CCat$. 
\begin{definition}[e.g.~\cite{markov-entropy}]\label{def:enriched-markov}
  A \emph{$\VCat$-enriched Markov category} is
  a $\VCat$-enriched symmetric monoidal category such that the monoidal unit is terminal: $\CCat(X, I)\cong 1$, and such that the underlying symmetric monoidal category is equipped with the structure of a Markov category (i.e.~a comonoid structure in the underlying ordinary category).

A $\VCat$-enriched Markov category is moreover \emph{distributive}
if it has $\VCat$-coproducts that distribute over the monoidal structure,
and such that the coproduct injections are deterministic, in the sense of the underlying ordinary category. 
\end{definition}
For any semicartesian category $\GCat$, recall the category of
functors $[\GCat\op,\Set]$. This extends $\GCat$ to a good `cosmos' for enrichment since   \begin{itemize}
\item $[\GCat\op,\Set]$ embeds $\GCat$ fully and faithfully (i.e. essentially as a full subcategory), via the Yoneda embedding
  $y(\gamma)=\GCat(-,\gamma)$.
\item $[\GCat\op,\Set]$ has all limits and colimits, computed pointwise.
\item $[\GCat\op,\Set]$ has a semicartesian structure such that the Yoneda embedding is a symmetric monoidal functor. This is given by Day convolution~\cite{day-convolution}, and has the following universal property: for $F, G, H\in[\GCat\op,\Set]$, to give a natural transformation $F\otimes G\to H$ is to give a natural family of functions ${F(\gamma)\times G(\varepsilon)\to H(\gamma\otimes \varepsilon)}$. 
\item $[\GCat\op,\Set]$ is moreover monoidal closed.
\end{itemize}
\begin{proposition}
  To give an $[\GCat\op,\Set]$-enriched distributive Markov category is to give a $\GCat$-graded distributive Markov category.
\end{proposition}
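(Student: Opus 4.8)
The plan is to establish the stated correspondence as a careful unwinding of definitions, the essential tools being the three recorded properties of $\VCat = [\GCat\op,\Set]$ under Day convolution: the Yoneda embedding $y$ is fully faithful and strong symmetric monoidal, all limits and colimits are computed pointwise, and --- crucially --- a natural transformation $F\otimes G\to H$ is the same data as a natural family of functions $F(\gamma)\times G(\varepsilon)\to H(\gamma\otimes\varepsilon)$. I would first match the underlying data. A $\VCat$-enriched category assigns to each pair $X,Y$ a hom-object $\CCat(X,Y)\in[\GCat\op,\Set]$, i.e.\ a family of sets $\CCat(X,Y)(\gamma)$ with restriction maps along morphisms of $\GCat$; setting $\CCat_\gamma(X,Y) := \CCat(X,Y)(\gamma)$ identifies these with the graded hom-sets and their regrading functions $\CCat_\gamma(X,Y)\to\CCat_\varepsilon(X,Y)$ of Definition~\ref{def:graded-markov}. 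Since $\GCat$ is semicartesian its unit $I$ is terminal, so the Day-convolution unit of $\VCat$ is $y(I)\cong 1$; the Yoneda lemma then gives $\CCat_0(X,Y)=\VCat(y(I),\CCat(X,Y))\cong\CCat(X,Y)(I)=\CCat_I(X,Y)$, so the underlying ordinary category $\CCat_0$ is exactly the grade-$I$ category $\CCat_I$, matching the clause ``agreeing with $\CCat_I$ when $\gamma=I$''.

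Next I would transport the operations. By the Day universal property, enriched composition $\CCat(X,Y)\otimes\CCat(Y,Z)\to\CCat(X,Z)$ is precisely a natural family $\CCat_\gamma(X,Y)\times\CCat_\varepsilon(Y,Z)\to\CCat_{\gamma\otimes\varepsilon}(X,Z)$, which is the graded composition $\circ$; likewise the enriched tensor of morphisms unwinds to the graded $\otimes$-family. Enriched associativity and unitality, expressed with the associator and unitors of Day convolution, become under this dictionary exactly associativity and unit ``up to the structure of $\GCat$'', because $y$ being strong symmetric monoidal means these coherence isomorphisms restrict on representables to those of $\GCat$, and the universal property reduces every coherence equation to its grade-wise instance. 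For coproducts, pointwise computation of products in $[\GCat\op,\Set]$ turns the enriched isomorphism $\CCat(X_1+\dots+X_n,Y)\cong\CCat(X_1,Y)\times\dots\times\CCat(X_n,Y)$, evaluated at each grade $\gamma$, into the required bijection $\CCat_\gamma(X_1+\dots+X_n,Y)\cong\prod_i\CCat_\gamma(X_i,Y)$, and conversely a natural grade-wise family of such bijections assembles into a presheaf isomorphism. The comonoid copy maps, determinism of coproduct injections, and distributivity of $\otimes$ over $+$ all live in the underlying ordinary category, already identified with $\CCat_I$, so these conditions coincide on the two sides. Running these identifications in both directions yields mutually inverse constructions.

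The main obstacle is the coherence bookkeeping rather than any single hard idea: one must verify that the enriched symmetric-monoidal coherence (associators, unitors, braidings) corresponds exactly to the ``up to the structure of $\GCat$'' regradings, even though the hom-objects $\CCat(X,Y)$ are arbitrary presheaves and not representables. The Day convolution universal property is what makes this tractable, since it reduces each coherence diagram in $\VCat$ to a family of ordinary-set equations indexed by grades, in which the isomorphisms in play are induced by $\GCat$'s own associator, unitors and symmetry. The remaining delicate point is the semicartesian condition: the enriched requirement $\CCat(X,I)\cong 1$ must be checked to correspond to terminality of the unit at \emph{every} grade, i.e.\ $\CCat_\gamma(X,I)\cong 1$ for all $\gamma$ and not merely at $\gamma=I$. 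This is where I would take most care, confirming that the canonical projections $\gamma\otimes\varepsilon\to\gamma$ arising from the semicartesian structure of $\GCat$ make the grade-$I$ terminal maps terminal at all grades, so that the two semicartesian conditions genuinely match.
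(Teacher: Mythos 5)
Your proposal is correct and takes essentially the same route as the paper: the paper's proof note simply invokes the standard characterization of locally $\GCat$-graded categories as $[\GCat\op,\Set]$-enriched categories and translates the distributive Markov structure across it, and your explicit unwinding via the Day convolution universal property (hom-presheaves as graded hom-sets with regrading, enriched composition as the graded family, $\CCat_0\cong\CCat_I$ by Yoneda at the unit, pointwise coproducts) is precisely how that characterization is established. Your flagged care point --- that the enriched condition $\CCat(X,I)\cong 1$ demands terminality at \emph{every} grade, not just at $I$ --- is the right subtlety to isolate and is consistent with how the graded definition is meant to be read (and holds in the paper's example $\MainCategory$).
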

\begin{proof}[Notes]
  This follows from the characterization of locally $\GCat$-graded categories as $[\GCat\op,\Set]$-enriched categories (e.g.~\cite{wood-graded,pbl-graded,gavranovic-thesis}), and then translating Definition~\ref{def:enriched-markov} across this correspondence to arrive at Definition~\ref{def:graded-markov}. 
\end{proof}
The correspondence between graded monads and enriched monads is also well understood (e.g.~\cite{mu-flex-grade}).

\paragraph{Aside.}
Recent work by Perrone~\cite{markov-entropy} has considered enriched Markov categories to obtain an abstract view of the distance between probabilities, which allows for an abstract development of entropy.
Their enriching category~$\VCat=\mathsf{Div}$ in~\cite{markov-entropy} is indeed semicartesian.
The full theory of enriched Markov categories perhaps deserves a more thorough analysis.

 \section{A graded Markov category for imprecise probability}
 \label{sec:our-category}
 \newcommand{\RR}{\mathbb{R}}
We recall ordinary Markov categories for finite probability (\S\ref{sec:finstoch}). We then consider a generic construction for graded Markov categories and instantiate it in our setting, obtaining the graded Markov category $\MainCategory$ (for `Imprecise Probability', \S\ref{sec:maincategory}). We conclude this section with a worked example (\S\ref{sec:maincategory-example}). In the subsequent sections (\S\ref{sec:oplax-functor}--\ref{sec:max}) we relate this graded Markov category with convex sets of distributions. 
 
\subsection{Ordinary Markov categories for probability}
We recall the Markov category $\FinStoch$ of finite sets and stochastic maps between them.

In what follows, for notational simplicity,
we regard every finite set~$n$ with a given enumeration $n=\{a_1,\dots, a_{|n|}\}$.
  Also, we choose a particular singleton set $1$ (empty product), and for every pair of finite sets, we choose a product set $(m\times n)$
  and projections $m\leftarrow (m\times n)\rightarrow n$. (This will have cardinality $\#(m\times n)=\#m \times \# n$, but there are many isomorphic choices for the set.)
\label{sec:finstoch}
\begin{definition}\label{def:prob-vect}
  Let $n$ be a finite set.  A \emph{probability vector} $p\in\RR^n$ is an $n$-indexed sequence of non-negative numbers that sum to $1$. We write $D(n)$ for the set of probability vectors of length~$n$.
\end{definition}
The set $D(n)$ is always a convex set: for any $r\in [0,1]$ and $p,q\in D(n)$, the convex combination
$r\cdot p + (1-r)\cdot q$ is again a probability vector in $D(n)$.
We write $p +_r q$ as shorthand for such a convex combination.
Every probability vector in $D(n)$ arises via convex combinations of the Dirac vectors
$\delta_i$, for $i\in n$, where $\delta_1=(1,0,0,0\dots)$, $\delta_2=(0,1,0,0\dots)$ and so on.

A matrix of real numbers $f\in \RR^{n\times m}$ is called \emph{stochastic}
if each column is a probability vector.
This is equivalent to requiring that as a linear map, it preserves the property of being a probability vector,
i.e.~if $p\in D(n)$ then $(f\, p)\in D(m)$.
In fact, every function $D(n)\to D(m)$ that preserves convex structure arises from a stochastic matrix in this way.
We call such a function a \emph{convex map}.
\begin{definition}[e.g.~\cite{fritz}, Ex.~2.5]\label{def:finstoch}
  The category $\FinStoch$ of finite sets and stochastic maps has as objects finite sets, and morphisms $m\to n$ 
  stochastic matrices in $\RR^{n\times m}$. Composition is matrix multiplication, and the identity morphism is the unit diagonal matrix.

  This can be made into a symmetric monoidal category, with monoidal structure on objects given by products of finite sets. On morphisms, using the enumeration, we take the Kronecker product of
  matrices. It is semicartesian where the terminal object is~$1$ because there is a unique stochastic matrix with one row. This is moreover a Markov category, with $\copymap_n:n\to n\otimes n$ given by the three-dimensional diagonal (in $\RR^{(n\times n)\times n}$).

  The Markov category $\FinStoch$ moreover has a distributive structure. The coproduct of objects is given by disjoint union, i.e.~by adding cardinalities, and with this view the coproduct of morphisms forms block matrices (concatenating the columns).
\end{definition}
The monad view on $\FinStoch$ is as follows. First, we consider the embedding ${J:\FinSet\to \Set}$.
We then regard $D$ (Def.~\ref{def:prob-vect}) as a $J$-relative monad $D\colon \FinSet\to \Set$, 
which is affine and commutative. In fact, there is an ordinary monad $D'$ on $\Set$, comprising
finitely supported probability distributions (e.g.~\cite[Ch.~2]{bart-prob-book}), and $D=D'J$.
The distributive Markov category $\FinStoch$ can then be regarded as the Kleisli category for this relative monad.

\subsection{Choices of products} \label{sec:choices-products}
As noted above, the monoidal structure and copy maps of $\FinStoch$ depend on the choice
of products of sets. Categorically this does not matter, because it is all the same up to canonical isomorphism. But since the grades enter the syntax of the language (\S\ref{sec:graded-markov}), the choice of product can affect the syntax because identity maps can be omitted while isomorphisms cannot.
(Subtleties about the choice of structure are not unusual in categorical logic, e.g.~\cite{cgh-revisit}.)
We consider two choices:
\begin{description}
\item[Numerals:] Say that a \emph{numeral} is a finite set of the form $\{1,\dots,n\}\subseteq \NN$, for some natural number $n\in \NN$.
  \begin{itemize}
  \item We choose the empty product $1=\{1\}$ to be the unit numeral.
  \item For numerals $M$ and $N$ we choose the product to be the numeral for the integer multiple $|M|\times |N|$, with the projections coming from quotient and remainder. For all other finite sets, it does not matter which product structure we choose. 
  \end{itemize}
  Since numerals are closed under this choice of product structure, we could cut down to the skeletal full subcategory of $\FinStoch$ whose objects are numerals.
  This has the property that the unitors and associator isomorphisms are identity maps, but the symmetry isomorphisms are not.
%With this choice, regrading by projections $2\xleftarrow{\pi_1} 2\times 2\xrightarrow{\pi_2} 2$,
%  as in the affinity law of \S\ref{sec:graded-pl}, must be explicit, because it is ambiguous. 

  This choice of product is reminiscent of de Bruijn notation, where the indexing is derived from syntactic positions and reindexing is explicit, but weakening can be clumsy.
\item[Name-respecting choices:]
  Let $\Atoms$ be an infinite set of names.
  If $m\subseteq \Atoms$ is a finite set of names then the set of functions $2^m$ describe true/false assignments
  to those names, as in Knightian choices. 
  \begin{itemize}\item   We let the empty product be the set of functions $2^\emptyset$.
  \item   If $m,n\subseteq \Atoms$ are disjoint finite sets of names then we choose the product $2^m\times 2^n$ to be
    the set $2^{m\cup n}$. The projections $2^m\leftarrow 2^{m\cup n}\to 2^n$ are given by restricting the functions to subsets. For all other finite sets, it does not matter which product structure we choose.
  \end{itemize}
  This kind of choice of product structure has the property that for sets of the form $2^n$, the unitor isomorphisms are identities; and the symmetry and associativity isomorphisms
  \[2^{m}\times 2^{n}\cong 2^{n}\times 2^{m}
    \qquad\qquad
 2^{m}\times (2^{n}\times 2^p)\cong    (2^{m}\times 2^{n})\times 2^p
  \] are identity maps
  for disjoint $m$, $n$ and $p$.

  This choice of product is reminiscent of nominal techniques (e.g.~\cite{pitts-nom-book}). It is convenient for the language of \S\ref{sec:graded-pl} because under the disjointness assumption, all the coherence regradings in Proposition~\ref{eqn:graded-pl-sound} are identity functions, and so can be omitted. Moreover, the regrading in the affine law can often be elided because it can be inferred from the grades. (Again, formal grade inference algorithms are beyond the scope of this paper.)
\end{description}

\subsection{The graded Markov category $\MainCategory$}
\label{sec:maincategory}
We now introduce our graded Markov category for imprecise probability, $\MainCategory$ (Def.~\ref{def:imp}). We first introduce a general construction for graded Markov categories (Prop.~\ref{prop:paragraded}). This is a variation on the `para' construction~\cite{backprop}, also called monoidal indeterminates~\cite{hermida-tennent}. Via the connections between Markov categories and commutative affine relative monads (\S\ref{sec:monad-def}), it is equivalently a graded version of the reader monad transformer~\cite{monad-trans} of the finite distributions monad.
\begin{equation} \label{eq:graded-monad}
  T_{\gamma}(X) = [\gamma\Rightarrow D(X)]
\end{equation}
\begin{proposition}
  \label{prop:paragraded}
  Let $\GCat$ be a semicartesian subcategory of a distributive Markov category $\CCat$.
  There is a graded distributive Markov category with the same objects as $\CCat$ and with the hom-sets
  \[\CCat_\gamma(X,Y)=\CCat(\gamma\otimes X,Y)\qquad(\gamma\in\GCat)\text.\]
  The reindexing is given by composition: if $f\in\GCat(\varepsilon,\gamma)$ and $g\in \CCat_\gamma(X,Y)$
  \[f^*(g)=g\circ (f\otimes X) \in \CCat_\varepsilon(X,Y)\text.\]
For the composition of $f\in \CCat_\gamma(X,Y)$ and $g\in\CCat_\varepsilon(Y,Z)$,
\[
  (g\circ f)=\Big(\gamma\otimes \varepsilon\otimes X\cong \varepsilon\otimes \gamma \otimes X\xrightarrow{\varepsilon\otimes f}\varepsilon\otimes Y\xrightarrow g Z\Big)\in\CCat_{\gamma\otimes \varepsilon}(X,Z)\text.
\]
The monoidal product of $f\in \CCat_\gamma(X,X')$ and $g\in\CCat_\varepsilon(Y,Y')$ is given by
\[
  (f\otimes g)=\Big(\gamma\otimes \varepsilon\otimes X\otimes Y\cong \gamma \otimes X\otimes \varepsilon\otimes Y\xrightarrow{f\otimes g}X'\otimes Y'\Big)\in\CCat_{\gamma\otimes \varepsilon}(X\otimes Y,X'\otimes Y')\text.
\]
\end{proposition}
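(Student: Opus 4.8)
The plan is to check, piece by piece, that the data exhibited in the statement satisfies the axioms of Definition~\ref{def:graded-markov}; since this is an instance of the `para'/monoidal-indeterminates construction~\cite{backprop,hermida-tennent} with grades restricted to the semicartesian subcategory $\GCat$, most of the work is bookkeeping of coherence isomorphisms rather than anything deep. First I would treat the underlying graded-category structure. The assignment $\gamma\mapsto\CCat(\gamma\otimes X,-)$ gives the reindexing $f^*(g)=g\circ(f\otimes X)$, and its functoriality ($\id^*=\id$ and $(f\circ f')^*=(f')^*\circ f^*$) is immediate from functoriality of $-\otimes X$ and associativity of composition in $\CCat$. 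For composition I would unfold the two bracketings of a triple $f:X\to Y$ at $\gamma$, $g:Y\to Z$ at $\varepsilon$, $h:Z\to W$ at $\zeta$ and check that $(h\circ g)\circ f$ and $h\circ(g\circ f)$ agree after regrading along the associator $\alpha:\gamma\otimes(\varepsilon\otimes\zeta)\to(\gamma\otimes\varepsilon)\otimes\zeta$ of $\GCat$; the unit laws follow similarly from the unitors, and naturality of composition in the grades amounts to pushing a reindexing on either factor through the defining formula, which is a direct consequence of naturality of the symmetry of $\CCat$.

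Next I would establish the graded symmetric monoidal structure from the formula for $f\otimes g$, checking that $\otimes$ is a graded bifunctor (the interchange law holding up to the evident regrading), that it is natural with respect to reindexing, and that the associator, unitors and symmetry of $\CCat$ lift to coherences holding up to the corresponding structure of $\GCat$. After unfolding, all of these reduce to instances of the coherence theorem for the symmetric monoidal category $\CCat$ together with naturality of its symmetry: the swaps introduced in the composition and tensoring formulas are thereby forced to compose correctly.

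For the Markov and semicartesian structure I would transport the comonoid structure of $\CCat$ at grade $I$: the copy map $\copymap_X\in\CCat_I(X,X\otimes X)=\CCat(I\otimes X,X\otimes X)\cong\CCat(X,X\otimes X)$ is that of $\CCat$, and its symmetry, associativity and counit laws, together with determinism of the coproduct injections, are inherited directly since they are statements living at grade $I$. The unit $I$ is terminal in the enriched sense because $\CCat_\gamma(X,I)=\CCat(\gamma\otimes X,I)\cong 1$ for every $\gamma$, using that $\CCat$ is semicartesian. Finally, for distributivity I would observe that $\CCat_\gamma(X+Y,Z)=\CCat(\gamma\otimes(X+Y),Z)$ and that the distributivity isomorphism $\gamma\otimes(X+Y)\cong(\gamma\otimes X)+(\gamma\otimes Y)$ of $\CCat$ identifies the canonical map $\CCat_\gamma(X+Y,Z)\to\CCat_\gamma(X,Z)\times\CCat_\gamma(Y,Z)$ (precomposition with the injections, which live at grade $I$) with the universal-property bijection $\CCat((\gamma\otimes X)+(\gamma\otimes Y),Z)\cong\CCat(\gamma\otimes X,Z)\times\CCat(\gamma\otimes Y,Z)$.

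I expect the main obstacle to be the coherence bookkeeping in the first two paragraphs rather than any of the Markov, semicartesian or distributive checks. One must confirm that every regrading produced when reassociating or swapping in the composition and tensor formulas is itself a morphism of $\GCat$ — so that $\GCat$ must be closed under the coherence isomorphisms, which holds because it is a semicartesian subcategory — and that these regradings are precisely the associators, unitors and symmetries demanded by Definition~\ref{def:graded-markov}. Once this correspondence between the swaps in the formulas and the coherence data of $\GCat$ is pinned down, each remaining axiom reduces to a diagram that commutes by symmetric monoidal coherence in $\CCat$.
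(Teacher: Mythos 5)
The paper states this proposition without proof, pointing only to the `para'/monoidal-indeterminates construction, so your direct verification of the axioms of Definition~\ref{def:graded-markov} is exactly the argument being left implicit, and it is correct. You also correctly isolate the only two points needing care: that the coherence isomorphisms invoked in the composition and tensor formulas are morphisms of $\GCat$ (so the required regradings exist, which holds because $\GCat$ is a semicartesian monoidal subcategory), and that distributivity of $\CCat$ transports to the bijection $\CCat_\gamma(X+Y,Z)\cong\CCat_\gamma(X,Z)\times\CCat_\gamma(Y,Z)$ because the coproduct injections live at grade $I$.
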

\begin{definition}
  A stochastic map $f\in\FinStoch(m,n)$ is \emph{surjective} if for every $j\in\{1\dots n\}$ there exists $i\in\{1\dots m\}$ such that
  $f_{i}$ is the Dirac distribution at $j$. In other words, the induced convex map $D(m)\to D(n)$ is surjective. 
  Let $\FinStochSurj$ be the category of finite sets and surjective stochastic maps.
  This is a semicartesian monoidal subcategory of $\FinStoch$.
\end{definition}
\begin{definition}\label{def:imp}
  The graded distributive Markov category $\MainCategory$ is the $\FinStochSurj$-graded version of $\FinStoch$, according to Proposition~\ref{prop:paragraded}.
\end{definition}
$\MainCategory$ is equivalently the graded Kleisli category of the monad \eqref{eq:graded-monad} over $\FinSet$ by the following isomorphisms, where regrading is done via the Kleisli extension
of the finite distributions monad.
\[
  \FinSet(X, [\gamma\Rightarrow D(Y)]) \cong \FinSet(\gamma\otimes X, D(Y)) \cong \FinStoch(\gamma\otimes X, Y) = \MainCategory_\gamma(X,Y).
\]
This graded distributive Markov category supports both finite probability and Knightian non-determinism.
\begin{itemize}
\item For binary probabilistic choice with bias $0.5$, we consider the morphism $\mlstinline{bernoulli}\in \MainCategory_1(1,2)$ given
  by the column vector $\begin{pmatrix} 0.5 \\ 0.5\end{pmatrix}$.
\item For a Knightian choice, we consider the morphism in $\mlstinline{knight}\in\MainCategory_2(1,2)$ given by
  the unit diagonal matrix. Formally this depends on the choice of product $\otimes$ (\S\ref{sec:choices-products}).

  For the name-respecting choice of product, for any $a\in\Atoms$ we let $\widehat a=2^{\{a\}}(\cong 2)$ and we have $\mlstinline{knight(a)}\in \MainCategory_{\widehat a}(1,2)$.
\end{itemize}
Thus $\MainCategory$ is a model of the language in \S\ref{sec:graded-pl}.

We can extend the above notions of probabilistic and non-deterministic choice between
elements of a finite set~$n$ by considering probability vectors (in $\MainCategory_1(1,n)$) and unit diagonal matrices (in $\MainCategory_n(1,n)$) respectively.
\paragraph{Remark:}
We could have considered a subcategory of $\FinStochSurj$ as the grading. One example is finite sets and (deterministic) surjective functions. Another example is the subcategory where the objects are of the form $2^A$ and where we only consider the surjections $2^B\to 2^A$  induced by injections $A\to B$ (connecting even closer with nominal sets~\cite{pitts-nom-book}; in this case, the semicartesian monoidal structure amounts to the disjoint union, $A\uplus B$.).
We leave for future work the question of to what extent the following results depend on this particular choice of grading. 
\subsection{Example calculation with $\MainCategory$}
\label{sec:maincategory-example}
In this example, to keep calculations brief, we use the numeral choice of product (\S\ref{sec:choices-products}). 
\begin{example} \label{exCalcs}
  Consider the scenarios from Listings~\ref{lst:prog1} and~\ref{lst:prog2} where we draw boolean values with Knightian uncertainty and from fair Bernoulli trials and combine them using different program logic.
  We denote outcomes as probability vectors of length three, representing the chance of $\rx$, $\gx$, and $\bx$, respectively. The program from Listing~\ref{lst:prog1} is the morphism
  \[
    \setlength\arraycolsep{5pt}
    (g + h) \circ f = \begin{pmatrix}
      1 & 0 \\ 0 & 0.5 \\ 0 & 0.5
    \end{pmatrix} \in \MainCategory_2(1,3),
  \]
  where $f$ denotes the conditional on the fair Bernoulli trial, $g$ and $h$ are the conditionals on the Knightian choices in each branch.
  \[
    \setlength\arraycolsep{5pt}
    f = \begin{pmatrix}
      0.5 \\ 0.5
    \end{pmatrix} \in \MainCategory_1(1,2) \qquad
    g = \begin{pmatrix}
      1 & 0 \\ 0 & 1 \\ 0 & 0
    \end{pmatrix} \in \MainCategory_2(1,3) \qquad
    h = \begin{pmatrix}
      1 & 0 \\ 0 & 0 \\ 0 & 1
    \end{pmatrix} \in \MainCategory_2(1,3)
  \]
  On the other hand, the program from Listing~\ref{lst:prog2} is the morphism
  \[
    \setlength\arraycolsep{5pt}
    (\widehat{\pi}_1^*(g) + \widehat{\pi}_2^*(h)) \circ f = \begin{pmatrix}
      1 & 0.5 & 0.5 & 0  \\ 0 & 0 & 0.5 & 0.5 \\ 0 & 0.5 & 0 & 0.5
    \end{pmatrix} \in \MainCategory_4(1,3),
  \]
  where $f$, $g$, and $h$ denote the same conditional statements, but now we lift the grading of $g$ and $h$ to $4$ via the projections to the Dirac distributions
  $\widehat{\pi}_1,\widehat{\pi}_2 \in \FinStochSurj(2\otimes 2, 2)$, to account for the decoupling of their Knightian uncertainties.
\end{example}

\section{First theorem: Relationship with the monad of convex sets of distributions}
\label{sec:oplax-functor}
In this section, we recall the properties of convex powersets of distributions. These form a monad~$\CP$ that has been proposed as a model for imprecise probability  (see also~\cite{mio-sarkis-vignudelli,mio-vignudelli,keimel-plotkin,bonchi-sokolova-vignudelli,jacobs-2008} and elsewhere). In contrast to our model $\MainCategory$ (\S\ref{sec:our-category}), it is not graded but also not commutative.
We connect our category $\MainCategory$ with convex powersets via the Kan extension method of~\cite{fritz-perrone-kan} (\S\ref{sec:kan-extensions}) and show that this yields an op-lax functor (\S\ref{sec:lax-functor}). The interpretation is that composition in our category gives tighter uncertainty bounds (Theorem~\ref{thm:unc-bounds}), which we illustrate in \S\ref{sec:oplax-example}.

We begin by recalling some basic properties of convex sets of distributions.

\newcommand{\image}{\mathsf{image}}
\begin{definition}
  A subset $S$ of $D(n)$ is \emph{convex} if
  it is closed under convex combinations: if $p,q\in S$ then for any $r\in[0,1]$ we have
  $p +_r q\in S$. 

A convex subset $S$ of $D(n)$ is \emph{finitely generated} if there is a finite sequence
$p_1\dots p_m\in S$ such that every element of $S$ can be achieved by convex combinations of the $p_i$'s.
In other words, $S=\{q\cdot (p_1\dots p_m)~|~q\in D(m)\}$, with the $p_i$'s regarded as column vectors and $q$ regarded as a row vector. 
\end{definition}
\begin{lemma}
  For any convex map $f:D(m)\to D(n)$ between the sets of probability vectors, 
  the image of $f$ is a convex subset of $D(n)$.

  Moreover, such convex subsets of $D(n)$ are finitely generated, and every finitely generated convex set arises in this way.
  \label{lemma:conveximage}
\end{lemma}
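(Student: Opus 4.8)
The plan is to prove the three assertions in turn, using throughout that a convex map $f\colon D(m)\to D(n)$ is precisely the linear map induced by a stochastic matrix $f\in\RR^{n\times m}$ (Definition~\ref{def:finstoch}), and that every $p\in D(m)$ is a convex combination of the Dirac vectors $\delta_1,\dots,\delta_m$.

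For convexity of the image, I would take $y_1,y_2\in\image(f)$, writing $y_i=f(p_i)$ with $p_i\in D(m)$, and for $r\in[0,1]$ observe that $y_1 +_r y_2 = f(p_1 +_r p_2)$ since $f$ preserves convex combinations. As $D(m)$ is itself convex, $p_1 +_r p_2\in D(m)$, so $y_1 +_r y_2\in\image(f)$, establishing convexity.

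For the claim that the image is finitely generated, I would set $p_i := f(\delta_i)$, the $i$-th column of the matrix, each of which is a probability vector in $D(n)$. Since an arbitrary $q\in D(m)$ decomposes as $q=\sum_i q_i\,\delta_i$ and $f$ is linear, we get $f(q)=\sum_i q_i\,p_i = q\cdot(p_1\dots p_m)$. Hence $\image(f)=\{\,q\cdot(p_1\dots p_m)\mid q\in D(m)\,\}$, and each generator $p_i=f(\delta_i)$ lies in $\image(f)$ because $\delta_i\in D(m)$, so this matches the definition of finite generation. Conversely, given a finitely generated convex set $S=\{\,q\cdot(p_1\dots p_m)\mid q\in D(m)\,\}$ with each $p_i\in D(n)$, I would assemble the matrix $g\in\RR^{n\times m}$ whose columns are $p_1,\dots,p_m$. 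Each column is a probability vector, so $g$ is stochastic, and the induced convex map satisfies $g(q)=q\cdot(p_1\dots p_m)$, giving $\image(g)=S$.

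I expect no serious obstacle: the entire argument reduces to the linearity of stochastic maps together with the fact that $D(m)$ is the convex hull of its Dirac vertices. The only point deserving care is checking that the generators produced in the second part (the columns $f(\delta_i)$) genuinely lie in the set, as the definition of \emph{finitely generated} requires, which holds precisely because $\delta_i\in D(m)$.
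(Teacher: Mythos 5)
Your proof is correct and follows essentially the same route as the paper's: convexity of the image via $f(p)+_r f(p')=f(p+_r p')$, generation by the columns $f(\delta_i)$, and the converse by assembling the generators into a stochastic matrix. The only difference is that you spell out the intermediate linearity computation and the check that the generators lie in the set, which the paper leaves implicit.
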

\begin{proof}
  Suppose $q,q'\in\image(f)$, and let $r\in[0,1]$.
  So we must have $p,p'\in D(m)$ such that $f(p)=q$ and $f(p')=q'$. 
  Then
  \[q +_r q' =
    f(p) +_r f(p') =
    f(p +_r p')\text,
  \]
  the last step because $f$ is a convex map, and so we see that $q +_r q'\in \image(f)$.

  The set $\image(f)$ is generated by $f(\delta_i)$ for $i=1\dots m$.
  Conversely if a set $S$ is generated by $p_1\dots p_m$, regarded as column vectors, then
  the matrix $(p_1\dots p_m)\in \FinStoch(m,n)$ determines a map $f:D(m)\to D(n)$ such that
  $\image(f)=S$.
\end{proof}
\subsection{Convex powersets of distributions}
\label{sec:convpowerset}
We write $\CPf(n)$ for the finitely generated convex subsets of $D(n)$. 
It supports convex combinations: if $r\in[0,1]$ and $S,T\in \CP(n)$ then
\[
  S +_r T\defeq \{p +_r q~|~p\in S,q\in T\}\in \CP(n)\text.
\]
There is moreover an ordering given by subset, and the join is
a convex closure of the union:
\[
  S\vee T \defeq \{p +_r q~|~r\in[0,1],p\in S,q\in T\}\text.
\]

\begin{proposition} \label{prop:rs2cp}
  There is a family of functions $\phi_{m,n}:\MainCategory_m(1,n)\to \CPf(n)$, that takes
  $f\in\MainCategory_m(1,n)$ to its image $\image(f)\in\CPf(n)$, and the family is natural in $m\in\FinStochSurj$ and
  $n\in \Set$.
\end{proposition}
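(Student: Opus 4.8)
The plan is to define $\phi_{m,n}$ by sending a morphism $f\in\MainCategory_m(1,n)=\FinStoch(m,n)$, viewed as a convex map $D(m)\to D(n)$, to its image $\image(f)$, and then to discharge three things: well-definedness, naturality in $n$, and naturality in $m$. Well-definedness is immediate from Lemma~\ref{lemma:conveximage}, which tells us that the image of a convex map is a finitely generated convex subset of $D(n)$, so $\image(f)\in\CPf(n)$ and $\phi_{m,n}$ lands where it should.

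For naturality in $n$, I would take a map $v:n\to n'$, regarded as a deterministic stochastic map with pushforward $D(v):D(n)\to D(n')$. The action on $\MainCategory_m(1,-)$ is post-composition $f\mapsto v\circ f$, and the action on $\CPf$ is direct image under $D(v)$. Both legs of the square compute $\image(v\circ f)$, so the square commutes by the elementary fact that the image of a composite is the direct image of the image, namely $\image(v\circ f)=D(v)\big(\image(f)\big)$. This step is purely formal and needs no special hypothesis on $v$.

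Naturality in $m$ is the substantive step, and it is exactly where the choice of grading category matters. Unwinding Proposition~\ref{prop:paragraded} with $X=I$, a surjection $u\in\FinStochSurj(m',m)$ acts by reindexing $u^*\colon \MainCategory_m(1,n)\to\MainCategory_{m'}(1,n)$, which up to the unit isomorphism is precomposition $f\mapsto f\circ u$; meanwhile $\CPf(n)$ carries no dependence on $m$, so the codomain leg of the square is the identity. Thus I must show $\image(f\circ u)=\image(f)$. Since $u$ is a \emph{surjective} stochastic map, its induced convex map $D(u)\colon D(m')\to D(m)$ is surjective by definition, so
\[
\image(f\circ u)=f\big(u(D(m'))\big)=f\big(D(m)\big)=\image(f).
\]

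I expect the main obstacle to be purely conceptual rather than computational: correctly tracking the variances (reindexing in $m$ is contravariant and precomposing, the codomain leg is constant, and the $n$-direction is covariant pushforward), and recognizing that \emph{surjectivity} of the grading maps is precisely the hypothesis that makes the image invariant under reindexing. A non-surjective reindexing could strictly shrink $\image(f\circ u)$ and break naturality, which is exactly why $\MainCategory$ is graded by $\FinStochSurj$ rather than all of $\FinStoch$.
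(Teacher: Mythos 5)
Your proposal is correct and follows essentially the same route as the paper: well-definedness via Lemma~\ref{lemma:conveximage}, naturality in $m$ reduced to $\image(f\circ u)=\image(f)$ using surjectivity of the grading map, and naturality in $n$ via the image-of-a-composite identity. The extra commentary on variances and on why the grading must be by $\FinStochSurj$ is accurate but does not change the argument.
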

\begin{proof}
  First, the fact that the image of $f$ is convex is Lemma~\ref{lemma:conveximage}.
  For naturality in $m$, suppose $g\in \FinStochSurj(m',m)$.
  Then naturality in $m$ amounts to the fact that
  \[\image(f\circ g) = \image(f)
  \]
  which is true since $g$ is surjective.
  For naturality in $n$, suppose $h \in \Set(n,n')$. Then naturality amounts to the fact that
  \[\image(D(h)\circ f) = \CPf(h)(\image(f))
  \]
  which is true because taking an image of $f$ after postcomposition with $D(h)$ is the same as a pointwise application of $D(h)$ to the image of $f$.
\end{proof}
\subsection{Connection to Kan extensions} \label{sec:kan-extensions}
Fritz and Perrone~\cite{fritz-perrone-kan,fp-metric-monad} propose a method to extract a canonical monad from a graded monad, by taking the left Kan extension.
They provide criteria for when this process works and induces a monad morphism.
This process \emph{cannot work} entirely for our situation, for the following reason.
First we note that we can interpret
\lstinline|bernoulli| and \lstinline|knight| as the following elements of $\CP(2)$ (considering sets up to their convex closure):
\begin{itemize}
\item \lstinline|bernoulli| is $\{(\begin{smallmatrix}1\\0\end{smallmatrix})\}+_{0.5}\{(\begin{smallmatrix}0\\1\end{smallmatrix})\} \ =\  \{(\begin{smallmatrix}0.5\\0.5\end{smallmatrix})\}$;
\item \lstinline|knight| is $\{(\begin{smallmatrix}1\\0\end{smallmatrix})\}\vee\{(\begin{smallmatrix}0\\1\end{smallmatrix})\}\ =\  \{(\begin{smallmatrix}1\\0\end{smallmatrix}),(\begin{smallmatrix}0\\1\end{smallmatrix})\}$.
\end{itemize}
This construction $\CP$ extends to a monad on $\Set$~\cite{jacobs-2008}. Therefore, we can follow through the derivation of Figure~\ref{fig:derivation} to see that the $\CP$ monad \emph{cannot} be commutative (apparently contradicting \cite[Lemma~5.2]{jacobs-2008}) since the convex sets in Figure~\ref{fig:tris}b--\ref{fig:tris}c are different.
(For another argument, note that $\CP$ contains two binary idempotent symmetric operations, $\vee$ and $+_{0.5}$, and see our discussion on the Eckmann-Hilton-like obstacle~\eqref{eqn:eckmann-hilton}
in \S\ref{sec:alg-perspective}.)

By contrast, $\MainCategory$ (\S\ref{sec:maincategory}) \emph{does} satisfy our desiderata (\S\ref{sec:intro:desiderata}). So there cannot be a monad morphism between $\MainCategory$ and $\CP$. Nonetheless, the Kan extension of our graded Markov category $\MainCategory$, regarded as a graded monad via Proposition~\ref{prop:markov-monad}, does give the finitely-generated convex powerset monad~$\CPf$ as a functor, just not as a monad.

\begin{proposition} \label{prop:kan-ext}
  The family $\phi_{m,n}:\MainCategory_m(1,n)\to \CPf(n)$ exhibits
  $\CPf:\FinSet\to \Set$ as the Kan extension of \[\MainCategory_{(-)}(1,=):\FinStochSurj\op \to[\FinSet,\Set] \] along the unique functor $\FinStochSurj\op\to 1$.
  % https://q.uiver.app/#q=WzAsMyxbMiwwLCIxIl0sWzIsMSwiW0ZpblNldCxTZXRdIl0sWzAsMCwiRlMiXSxbMCwxLCJDUF97ZmlufSJdLFsyLDAsIiEiXSxbMiwxLCJJbXBQcm9iX3soLSl9KDEsPSkiLDIseyJsYWJlbF9wb3NpdGlvbiI6MjB9XV0=
  % The above is with some updates...
  \[\begin{tikzcd}
    \FinStochSurj\op & 1 \\
    & {[\FinSet,\Set]}
    \arrow["{\CPf}", from=1-2, to=2-2]
    \arrow["{!}", from=1-1, to=1-2]
    \arrow["{\MainCategory_{(-)}(1,=)}"', from=1-1, to=2-2]
  \end{tikzcd}\]
\end{proposition}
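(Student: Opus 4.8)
The plan is to unfold the left Kan extension along $!\colon \FinStochSurj\op\to 1$ as a colimit and then check its universal property pointwise. Since $1$ is terminal, a functor $1\to[\FinSet,\Set]$ is just an object, and the left Kan extension is the colimit $\colim_{\FinStochSurj\op} F$ of $F=\MainCategory_{(-)}(1,=)$. Colimits in $[\FinSet,\Set]$ are computed pointwise, so it suffices to show that for each finite set $n$ the family $\phi_{m,n}\colon \MainCategory_m(1,n)\to\CPf(n)$ is a colimiting cocone in $\Set$; naturality of $\phi$ in $n$ (Proposition~\ref{prop:rs2cp}) then reassembles these into a colimit in the functor category. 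Under the identification $\MainCategory_m(1,n)=\FinStoch(m,n)$, a morphism $g\colon m'\to m$ of $\FinStochSurj$ acts by $f\mapsto f\circ g$, so the cocone equation $\phi_{m',n}(f\circ g)=\phi_{m,n}(f)$ reads $\image(f\circ g)=\image(f)$, which holds because $g$ is surjective; this is exactly naturality in $m$ from Proposition~\ref{prop:rs2cp}.

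For the universal property, fix a competing cocone $\psi_m\colon\FinStoch(m,n)\to Z$. By Lemma~\ref{lemma:conveximage} every $S\in\CPf(n)$ is the image of some $f\in\FinStoch(m,n)$, so any mediating map is forced to satisfy $u(S)=\psi_m(f)$; this simultaneously pins down the unique candidate for $u$ and yields uniqueness. The crux is well-definedness: if $\image(f)=\image(f')=S$ with $f\in\FinStoch(m,n)$ and $f'\in\FinStoch(m',n)$, I must show $\psi_m(f)=\psi_{m'}(f')$.

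The key construction is a cospan of surjections through which both $f$ and $f'$ factor. Let $g\in\FinStoch(m+m',n)$ be the matrix whose columns are those of $f$ followed by those of $f'$; since both column families generate $S$, we have $\image(g)=S$. As each column $f'(\delta_i)$ lies in $S=\image(f)$, choose $q_i\in D(m)$ with $f(q_i)=f'(\delta_i)$, and let $\phi\colon m+m'\to m$ be the stochastic matrix whose first $m$ columns are $\delta_1,\dots,\delta_m$ and whose remaining columns are $q_1,\dots,q_{m'}$; then $f\circ\phi=g$, and $\phi$ is surjective because its first $m$ columns are Diracs. Symmetrically, using that each column of $f$ lies in $S=\image(f')$, build a surjection $\phi'\colon m+m'\to m'$ with $f'\circ\phi'=g$. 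Applying the cocone condition for $\psi$ along $\phi$ and along $\phi'$ gives $\psi_m(f)=\psi_{m+m'}(g)=\psi_{m'}(f')$, as required. Hence $u(S):=\psi_m(f)$ is well-defined, and $u\circ\phi_{m,n}=\psi_m$ by construction, so $\phi_{\bullet,n}$ is a colimiting cocone in $\Set$.

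I expect the cospan construction of the previous paragraph to carry essentially all the content: the reduction to a pointwise colimit and the verification of the cocone axiom are formal, but identifying the equivalence generated by surjection-reindexing — concretely, exhibiting $\phi,\phi'$ witnessing that equal-image matrices become identified — is where the surjectivity built into the grading category $\FinStochSurj$ is indispensable (it is what makes the chosen blocks of Diracs legitimate morphisms). The one routine point still to record is that the pointwise mediating maps $u$ are natural in $n$, so that they assemble into a morphism of $[\FinSet,\Set]$; this is immediate from the naturality in $n$ already established in Proposition~\ref{prop:rs2cp}, since the whole cocone and the construction of $u$ are compatible with postcomposition by maps $n\to n'$.
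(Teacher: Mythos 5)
Your proof is correct, and its overall skeleton (reduce the Kan extension along $!$ to a pointwise colimit in $\Set$, get surjectivity of the comparison from Lemma~\ref{lemma:conveximage}, then show that two matrices with the same image are identified in the colimit) matches the paper's. The one genuinely different ingredient is the witnessing zigzag for that last step. The paper builds a \emph{cospan}: it takes $m''$ to be the set of extremal points of the common convex hull, writes each column of $f$ and $f'$ as a convex combination of those extremal points to get surjections $g\colon m\to m''$, $g'\colon m'\to m''$, and a common $h\colon m''\to n$ with $f=h\circ g$ and $f'=h\circ g'$. You instead build a \emph{span}: the juxtaposed matrix $g\in\FinStoch(m+m',n)$ together with surjections $m+m'\to m$ and $m+m'\to m'$ obtained by padding a block of Diracs with columns $q_i$ solving $f(q_i)=f'(\delta_i)$. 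Your route is arguably more elementary and robust: it needs no appeal to the existence and uniqueness of extremal points of a finitely generated convex set, and surjectivity of your connecting maps is immediate from the Dirac block, whereas the paper has to argue separately that $f$ hits the extremal points. The paper's route buys a canonical minimal representative ($m''$ is determined by the convex set alone), which is aesthetically pleasing but not needed for the colimit identification. One cosmetic remark: you overload $\phi$ both for the cocone $\phi_{m,n}$ and for your surjection $m+m'\to m$; rename the latter to avoid confusion.
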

\begin{proof}
  Kan extensions in $[\FinSet,\Set]$ can be computed pointwise,
  and for any $n\in\FinSet$ the Kan extension of $\MainCategory_{(-)}(1,n):\FinStochSurj\op\to\Set$
along $\FinStochSurj\op\to 1$ is simply the colimit of the functor. Thus it suffices to show that the canonical function
\[\Phi:\colim_{m\in\FinStochSurj\op}\MainCategory_m(1,n)\to \CPf(n)\]
(induced by $\phi$) is a bijection.
This function $\Phi$ is given by $\Phi[m,f\in \FinStoch(m,n)] = \image(f)$.
To see that it is surjective we recall that every finitely generated convex set is the image of some convex function $D(m)\to D(n)$
(Lemma~\ref{lemma:conveximage}).
To see that it is injective we suppose that $\image(f)=\image (f')$, for $f\in\FinStoch(m,n)$ and $f'\in \FinStoch(m',n)$.
We must show that $[m,f]=[m',f']$ in the colimit. It suffices to find $m''$ with $h\in\FinStoch(m,n)$
and surjections $g\in\FinStochSurj(m,m'')$ and $g'\in\FinStochSurj(m',m'')$ such that the following diagram commutes:
% https://q.uiver.app/#q=WzAsNCxbMCwwLCJtIl0sWzIsMCwibSciXSxbMSwxLCJtJyciXSxbMSwyLCJuIl0sWzAsMiwiZyJdLFsxLDIsImcnIiwyXSxbMCwzLCJmIiwyLHsiY3VydmUiOjJ9XSxbMSwzLCJmJyIsMCx7ImN1cnZlIjotMn1dLFsyLDMsImgiLDJdXQ==
\[\begin{tikzcd}[column sep=scriptsize]
	m && {m'} \\
	& {m''} \\
	& n
	\arrow["g", from=1-1, to=2-2]
	\arrow["{g'}"', from=1-3, to=2-2]
	\arrow["f"', from=1-1, to=3-2, bend right=20]
	\arrow["{f'}", from=1-3, to=3-2, bend right=-20]
	\arrow["h"', from=2-2, to=3-2]
\end{tikzcd}\]
The finitely generated convex set $\image(f)=\image(f')$ must have a unique convex hull, and we let $m''$ be the number of extremal points of the convex hull, which are uniquely determined. We construct $g$ by noting that $f(i)$ must be a convex combination from the $m''$ extremal points, so we let $g(i)$ be the probability vector corresponding to that combination. We construct $g'$ from $f'$ similarly. To see that~$g$ is surjective we note that since $f$ is surjective onto its image we must have points in $m$ that map onto the extremal points, and hence onto all the points of $m''$ via $g$. Similarly, $g'$ is surjective. 
\end{proof}
\newcommand{\ext}{\mathsf{ext}}
\subsection{An op-lax functor and tighter uncertainty bounds} \label{sec:lax-functor}
\begin{definition}
  The construction $\CPf$ extends to a relative monad. 
  The unit morphism $\eta_n:n\to \CPf(n)$ picks out the singleton set containing the Dirac vector, $\eta_n(i)=\{\delta_i\}$.
  The Kleisli extension takes a function $f:m\to \CPf(n)$ to a function $f^*:\CPf(m)\to \CPf(n)$ given by
  \[
    f^*(X) = \bigvee_{x\in \ext(X)} \sum_{i\in m} x_{i} \cdot f(i);
  \]
  where $\ext$ takes the extreme points of the finitely generated convex subset.
\end{definition}
From this structure, we build a Kleisli category as usual.
\begin{itemize}
\item The objects of $\Kl(\CPf)$ are natural numbers.
\item The morphisms $m\to n$ are functions $m\to \CPf(n)$.
\item The identity morphism is the unit $\eta$. Composition of $g$ and $f$ is given by $g^* \circ f$. 
\end{itemize}
In fact, this category is order-enriched. That is to say, the hom-sets $\Kl(\CPf)$ have a natural partial order structure given by $f\leq g$ if for all $i$, $f(i)\subseteq g(i)$. Composition is thus monotone. 

We now extend the quotient of Proposition~\ref{prop:rs2cp} to an identity-on-objects op-lax functor
$\MainCategory\to \Kl(\CPf)$. 
\begin{theorem} \label{thm:unc-bounds}
  Consider the assignment of a morphism $f\in \MainCategory_\gamma(m,n)$ to
  $R(f) : m\to \CPf(n)$ given by $R(f)(i)=\image(f(-,i))$. This defines an op-lax functor
  \[\MainCategory\to \Kl(\CPf)
  \]
\end{theorem}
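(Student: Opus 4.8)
The plan is to regard $R$ as identity on objects and to verify the three conditions that make it an op-lax functor from the locally graded category $\MainCategory$ into the order-enriched category $\Kl(\CPf)$: that $R$ is well defined independently of the grade, that it preserves identities, and that it is op-lax on composition. Since $\Kl(\CPf)$ is order-enriched there is at most one comparison $2$-cell between any pair of parallel morphisms, so the op-lax coherence axioms hold automatically and only these three conditions require checking. Well-definedness (invariance under regrading) is the naturality in the grade already established in Proposition~\ref{prop:rs2cp}: for a surjection $h\in\FinStochSurj(\varepsilon,\gamma)$, regrading replaces $f(-,i)$ by $f(-,i)\circ h$, and $\image(f(-,i)\circ h)=\image(f(-,i))$ because $h$ is surjective. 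Identity preservation is immediate, since the identity at grade $I=1$ is the identity matrix, so $\id_m(-,i)=\delta_i$ and $R(\id_m)(i)=\{\delta_i\}=\eta_m(i)$, which is stronger than op-lax requires.

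The substance is the op-lax inequality $R(g\circ f)\subseteq R(g)\circ R(f)$ for $f\in\MainCategory_\gamma(m,n)$ and $g\in\MainCategory_\varepsilon(n,p)$, checked pointwise in $i\in m$. Fix $i$, write $X=R(f)(i)=\image(f(-,i))$, and set $G(x)\defeq\sum_{n'}x_{n'}\cdot\image(g(-,n'))$ for $x\in D(n)$. By the definition of the Kleisli extension of $\CPf$ the right-hand side is $R(g)^*(X)=\bigvee_{x\in\ext(X)}G(x)$, while by the composition formula of Proposition~\ref{prop:paragraded} the left-hand side is the image of the convex map $(g\circ f)(-,i)$, which sends $w\in D(\gamma\otimes\varepsilon)$ first through $\varepsilon\otimes f(-,i)$ and then through $g$.

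The key step is to disintegrate the input. Every $w\in D(\gamma\otimes\varepsilon)$ decomposes as $w=\sum_{c}w_\gamma(c)\,(\delta_c\otimes u^c)$, with $w_\gamma\in D(\gamma)$ the $\gamma$-marginal and $u^c\in D(\varepsilon)$ the conditional. Feeding a product distribution $\delta_c\otimes u^c$ decouples the two stages, giving $(g\circ f)(-,i)(\delta_c\otimes u^c)=\sum_{n'}x^c_{n'}\cdot g(-,n')(u^c)\in G(x^c)$, where $x^c\defeq f(-,i)(\delta_c)\in X$. Hence, by linearity of the convex map, any point $q$ of the left-hand side is a convex combination $q=\sum_c w_\gamma(c)\cdot(g\circ f)(-,i)(\delta_c\otimes u^c)$, so that $q\in\sum_c w_\gamma(c)\,G(x^c)\subseteq\bigvee_c G(x^c)$.

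It then remains to absorb the generators $x^c$ into the join over $\ext(X)$, and this is where I expect the main obstacle to lie: the $x^c=f(-,i)(\delta_c)$ are images of Diracs and need not be extreme points of $X$, whereas the Kleisli extension is defined only over $\ext(X)$. To bridge this I would prove that $G$ is convex-linear, $G(x+_r x')=G(x)+_r G(x')$ as a convex combination of convex sets, which follows from $\lambda S+\mu S=(\lambda+\mu)S$ for convex $S$ together with the distributivity of Minkowski sums. Since $X$ is a polytope, every $x\in X$ is a convex combination of extreme points, so convex-linearity yields $G(x)\subseteq\bigvee_{x'\in\ext(X)}G(x')$ for all $x\in X$, whence $\bigvee_{x\in X}G(x)=\bigvee_{x\in\ext(X)}G(x)$. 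As each $x^c\in X$, we conclude $q\in R(g)^*(X)$, which is the desired inclusion. This lemma is also the conceptual reason the inclusion can be strict: $\MainCategory$ retains the correlation between the two named choices, whereas $\Kl(\CPf)$ re-convexifies at each stage and thereby loses it.
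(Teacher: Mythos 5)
Your proof is correct, and it reaches the op-lax inequality by a genuinely different technical route from the paper's. The paper introduces an auxiliary composite $(g\ast f)$ at the larger grade $\gamma\times\varepsilon^{m}$, in which each intermediate outcome is allowed its own fresh copy of $\varepsilon$, and derives the inclusion from the copying injection $\gamma\times\varepsilon\to\gamma\times\varepsilon^{m}$ together with the identification $\image(g\ast f)=R(g)\circ R(f)$. You instead disintegrate an arbitrary $w\in D(\gamma\otimes\varepsilon)$ into its $\gamma$-marginal and conditionals, push each conditional piece through $g$ separately, and then absorb the resulting generators $G(x^{c})$ — indexed by images of Diracs rather than by extreme points of $X$ — into $\bigvee_{x\in\ext(X)}G(x)$ via convex-linearity of the Minkowski-sum map $G$. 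Both arguments capture the same phenomenon (the composite $g\circ f$ reuses one $\varepsilon$-sample across all intermediate values, while the Kleisli composite refreshes it per value), but your extreme-point lemma is precisely the detail that the paper's proof note leaves implicit when it asserts $\image(g\ast f)=R(g)\circ R(f)$, so your version is, if anything, more self-contained there. Your remaining observations — that grade-invariance is the naturality in $m$ from Proposition~\ref{prop:rs2cp}, that identities are preserved on the nose, that checking the inequality pointwise in $i\in m$ plays the role of the paper's reduction via coproduct preservation to domain $1$, and that order-enrichment of $\Kl(\CPf)$ trivializes the oplax coherence axioms — all align with what the paper needs.
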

\begin{proof}[Proof notes.]
  It is straightforward that $R(\id)=\id$.
  It remains to show that $R(g\circ f)\subseteq R(g)\circ R(f)$.
  Since we will show that $R$ preserves finite coproducts, it is sufficient to suppose that the domain of $f$ is $1$. 
  So consider $f\in \MainCategory_\gamma(1,m)$ and $g\in\MainCategory_\varepsilon(m,n)$.
  So $(g\circ f)\in\FinStoch(\gamma\times \varepsilon,n)$. 
  We must show that for all $(i,j)\in (\gamma\times \varepsilon)$, the probability vector $(g\circ f)(i,j)$
  is in
  \[(R(g)\circ R(f))() = R(g)^*(\image(f))\in\CPf(n)\text.\]
  To show this, we note that the grade of $(g\circ f)$ is $(\gamma\times \varepsilon)$, but we can also consider an alternative kind of composite
  $(g\ast f)$ with a bigger grade $(\gamma\times \varepsilon^m)$. This is given by
  \[(g\ast f)=\Big(\gamma\times \varepsilon^m\xrightarrow f m\times \varepsilon^m \xrightarrow {(\mathit{eval}, \pi_1)} \varepsilon\times m
    \xrightarrow g n\Big)\text;\]
  where the middle arrow is the evident function between sets regarded as a stochastic matrix.
  Contrast with
  \[(g\circ f)=\Big(\gamma\times \varepsilon\xrightarrow f m\times \varepsilon \xrightarrow {\mathit{swp}} \varepsilon\times m
    \xrightarrow g n\Big)\text.\]
  The function $(\gamma\times \varepsilon)\to (\gamma\times \varepsilon^m)$ that copies $\varepsilon$ is an injection and exhibits
  \[\image(g\circ f)\subseteq \image(g\ast f)
  \]
  Moreover, we have that \[\image(g\ast f)=R(g)\circ R(f)\text.\] The intuitive point is that in $(g\ast f)$, for each possible intermediate~$m$ we are allowed to use different choices of~$\varepsilon$, but in $(g\circ f)$, each possible intermediate~$m$ will use the same choices of~$\varepsilon$.

  To see that $R$ preserves coproducts we note that on objects it is immediate, and expanding the definitions shows that the coproduct injections and copairings are exactly preserved by $R$. 
\end{proof}
(Here, we are regarding $\MainCategory$ with discrete order enrichment but non-trivial local grading, and
$\Kl(\CPf)$ with non-trivial order enrichment but trivial local grading.
There may be interesting ways to unify the two different enrichments.)
\subsection{Discussion and example of tighter uncertainty bounds}\label{sec:oplax-example}

\begin{example}
  We again revisit the scenarios from Examples~\ref{exA} and~\ref{exB} where boolean values are drawn with Knightian uncertainty or from fair Bernoulli trials and combined using different program logic.
  Consider the morphism denoting a fair Bernoulli trial ($f$ from Example~\ref{exCalcs}),
  \[
    \setlength\arraycolsep{5pt}
    f = \begin{pmatrix}
      0.5 \\ 0.5
    \end{pmatrix} \in \MainCategory_1(1,2),
  \]
  and a morphism that employs Knightian uncertainty on each of its inputs ($g+h$ from Example~\ref{exCalcs}),
  \[
    g = \left(\begin{matrix}
      1 & 0 \\ 0 & 1 \\ 0 & 0
      \end{matrix} \ \middle| \ \begin{matrix}
      1 & 0 \\ 0 & 0 \\ 0 & 1
    \end{matrix}\right) \in \MainCategory_2(2,3).
  \]
  Then, considering sets up to their convex closure, $R(f) : 1 \to \CPf(2)$ maps the singleton set to
  \[
    \left\{\begin{pmatrix} 0.5 \\ 0.5\end{pmatrix}\right\},
  \]
  $R(g) : 2 \to \CPf(3)$ maps the two-element set to
  \[
    \left\{\begin{pmatrix} 1 \\ 0 \\ 0 \end{pmatrix}, \begin{pmatrix} 0 \\ 1 \\ 0 \end{pmatrix}\right\} \text{ and }
    \left\{\begin{pmatrix} 1 \\ 0 \\ 0 \end{pmatrix}, \begin{pmatrix} 0 \\ 0 \\ 1 \end{pmatrix}\right\},
  \]
  and (following Example~\ref{exCalcs}) $R(g\circ f) : 1 \to \CPf(3)$ maps the singleton set to
  \[
    \left\{\begin{pmatrix} 1 \\ 0 \\ 0 \end{pmatrix}, \begin{pmatrix} 0 \\ 0.5 \\ 0.5 \end{pmatrix}\right\}.
  \]
  This is the convex subset in Figure~\ref{fig:tris}b if we consider the probability vectors as giving the corresponding chances of outcomes $\rx$, $\gx$, and $\bx$.

  On the other hand, by composing $g$ with $f$ after mapping them into $\mathrm{Kl}(\CPf)$, we lose the ability to distinguish which outcomes were related via the
  same Knightian choices. So the morphism $R(g)\circ R(f) : 1 \to \CPf(3)$ maps the singleton element to
  \[
    \left\{\begin{pmatrix} 1 \\ 0 \\ 0 \end{pmatrix}, \begin{pmatrix} 0.5 \\ 0 \\ 0.5 \end{pmatrix}, \begin{pmatrix} 0.5 \\ 0.5 \\ 0 \end{pmatrix}, \begin{pmatrix} 0 \\ 0.5 \\ 0.5 \end{pmatrix}\right\},
  \]
  which is the convex subset given in Figure~\ref{fig:tris}c. Thus, $R(g\circ f) \subsetneq R(g)\circ R(f)$.
\end{example}

Therefore, by accounting for corresponding choices of Knightian uncertainty within morphism compositions, our category $\MainCategory$ obtains tighter bounds on the imprecise probabilities.

\section{Second theorem: Maximality as a compositional theory}
\label{sec:max}
In Proposition~\ref{prop:rs2cp} we gave a family of maps $\phi$ that convert our compositional imprecise probability into convex sets of probability distributions.
These maps are not injective, and in this sense the model of $\MainCategory$ is intensional. This raises a question of whether we could have made a less intensional model than $\MainCategory$ while still maintaining Desiderata~1 and~2 and the connection to convex sets of distributions. In Theorem~\ref{thm:maximality} we answer this question negatively in the following sense: we cannot quotient
the hom-sets of $\MainCategory$ without either losing the connection with convex sets (and hence statistics) or losing the monoidal or distributive structure (and hence the compositionality desiderata of \S\ref{sec:intro:desiderata}). In this way, $\MainCategory$ is maximal.
\begin{definition}
  Let $\GCat$ be a semicartesian category. Let $\CCat$ and $\DCat$
  be $\GCat$-graded distributive Markov categories.
  A \emph{graded distributive Markov functor} $F:\CCat\to \DCat$ is given by
  a mapping from the objects of $\CCat$ to the objects of $\DCat$ and
  a family of mappings from $\CCat_\gamma(X,Y)\to \DCat_\gamma(F(X),F(Y))$,
  strictly preserving the composition, monoidal and coproduct structure,
  and the copy maps. 
\end{definition}
\paragraph{Aside.} In view of \S\ref{sec:markov-def}, we note that a graded distributive Markov functor is the same thing as the existing notion of strict distributive monoidal functor between distributive monoidal enriched categories (e.g.~\cite{kelly}), together with the requirement that the copy maps are preserved, which is in common with the ordinary Markov category literature~\cite{fritz}. We could also formulate this in terms of monad morphisms, following \S\ref{sec:monad-def}.
\begin{theorem} \label{thm:maximality}
  Let $\CCat$ be $\FinStochSurj$-graded distributive Markov category
  with a graded distributive Markov functor
  $F:\MainCategory \to \CCat$ and a
  natural family of functions \[\psi_{m,n}:\CCat_m(1,n)\to \CPf(n)\] such that
  \[\phi_{m,n}:\MainCategory_m(1,n)\to \CPf(n)\]
  (Proposition~\ref{prop:rs2cp}) factors through $\psi$.
  Then $F$ is faithful: if $F(f)=F(g)$ in $\CCat$ then also $f=g$ in $\MainCategory$.
\end{theorem}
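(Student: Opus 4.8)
The plan is to prove faithfulness first on the hom-sets of \emph{states} $\MainCategory_\gamma(1,n)$ and then bootstrap to arbitrary domains. The reduction to states is the easy part: since $F$ strictly preserves finite coproducts and hence the coproduct bijection $\CCat_\gamma(X{+}Y,Z)\cong\CCat_\gamma(X,Z)\times\CCat_\gamma(Y,Z)$, and every object is a finite coproduct of copies of $1$, any morphism out of $X$ is determined by its restrictions along the injections. Thus it suffices to show that $F$ is injective on each $\MainCategory_\gamma(1,n)$. For the base grade $\gamma=I$ this is immediate: $\MainCategory_I(1,n)=D(n)$ and $\phi_{I,n}(p)=\image(p)=\{p\}$ is injective, so from $\phi=\psi\circ F$ we get that $F$ is already injective on grade-$I$ states. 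In particular $F$ distinguishes individual columns.

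For a general grade I would reconstruct $f\in\MainCategory_\gamma(1,n)$ from its columns $f(c)\in D(n)=\MainCategory_I(1,n)$, indexed by $c\in\gamma$. Using the distributive structure (the object $\gamma$ is a coproduct of $|\gamma|$ copies of $1$) together with the generic Knightian morphism $\mathsf{k}_\gamma\in\MainCategory_\gamma(1,\gamma)$ (the identity stochastic matrix), one has the factorisation $f=[f(c)]_{c\in\gamma}\circ\mathsf{k}_\gamma$, where $[f(c)]_c\in\MainCategory_I(\gamma,n)$ is the copairing of the column-states. Because $F$ preserves copairing, composition and $\mathsf{k}_\gamma$, this gives $F(f)=[Ff(c)]_c\circ F(\mathsf{k}_\gamma)$, so $F(f)$ is the image of the tuple $(Ff(c))_c$ under a fixed map. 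Combined with the base case (so that $Ff(c)$ already determines $f(c)$), the theorem reduces to a single statement: post-composition with $F(\mathsf{k}_\gamma)$ reflects equality of copairings, i.e.\ $F(\mathsf{k}_\gamma)$ is epic in the appropriate graded sense.

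The main obstacle is exactly this last reflection. In $\MainCategory$ the map $(-)\circ\mathsf{k}_\gamma:\MainCategory_I(\gamma,n)\to\MainCategory_\gamma(1,n)$ is a bijection --- it is the identity on stochastic matrices --- so $\mathsf{k}_\gamma$ is epic there; but a functor need not preserve epis, and the real difficulty is that the data distinguishing two matrices with the same set of columns, in particular the \emph{non-extremal} columns, is invisible to $\phi$, which only records a convex hull. To recover it I would combine the distributive and monoidal structure with the naturality of $\psi$. Concretely, from $F(f)=F(g)$ and coproduct-compatibility one derives $F[f,g]=F[f,f]$ on domain $1{+}1$; composing with the fair Bernoulli state and with arbitrary deterministic post-processings, then applying $\psi$, forces equalities of convex sets such as $\mathrm{conv}_c\{f(c)+_r g(c)\}=\image(f)$ for every bias $r$. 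The crux is to show that the totality of such tests --- equivalently, a test against the correlated ``graph'' morphism whose image has the columns as genuinely distinct extreme points, separated by their $\gamma$-marginals --- is strong enough that any spurious identification would collapse a convex set and hence violate injectivity of $\phi$ on singletons, which is impossible.

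Establishing that these $\phi$-level tests suffice to separate even interior columns is therefore the heart of the argument, and I expect it to be where essentially all the work lies; the surrounding reductions (to states, to grade $I$, and through $\mathsf{k}_\gamma$) are formal. Once the reflection is in place we conclude $Ff(c)=Fg(c)$ for every $c\in\gamma$, whence $f(c)=g(c)$ by the base case, and so $f=g$; faithfulness on general hom-sets then follows from the coproduct reduction.
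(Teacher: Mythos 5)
Your reduction to states via coproduct preservation and your diagnosis of the obstruction are both correct, and they match the paper's setup: the only information that survives $\psi\circ F=\phi$ is a convex hull, which forgets non-extremal columns, so one needs a device that turns every column of $f\in\MainCategory_m(1,n)$ into an identifiable extreme point. But that device is exactly where your argument stops, and the intermediate steps you do write down do not supply it. The family of tests $\mathrm{conv}_c\{f(c)+_r g(c)\}=\image(f)$ is provably insufficient: take $f$ with columns $(1,0),(0,1),(\tfrac12,\tfrac12)$ and $g$ with columns $(1,0),(0,1),(\tfrac14,\tfrac34)$; every columnwise mixture of $f$ and $g$ still has convex hull all of $D(2)$, so all of these tests pass while $f\neq g$. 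Moreover, the ``correlated graph morphism'' you gesture at cannot be obtained from the obvious monoidal pairing $(f\otimes d)\circ\copymap_1$ with the Dirac tuple $d\in\MainCategory_m(1,m)$: that morphism sits at grade $m\otimes m$ with columns $f(c)\otimes\delta_{c'}$ indexed by \emph{independent} $(c,c')$, so its image again records only the set of columns of $f$, and the diagonal $m\to m\otimes m$ is not surjective as a stochastic map and hence is not an admissible regrading in $\FinStochSurj$.

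The ingredient you are missing is a coproduct-based tagging rather than a tensor-based one. With $d$ the Dirac tuple and $\iota,\jmath$ the lifted coproduct injections into $m+n$, the paper forms the columnwise mixture $\jmath\circ f+_{0.5}\iota\circ d$, which is the copairing $[\,\jmath\circ f,\ \iota\circ d\,]$ precomposed with a Bernoulli state and is therefore determined, after applying $F$, by $F(f)$ alone. Its $i$-th column is $(\tfrac12\delta_i,\tfrac12 f(i))$, so the columns are affinely independent, all extremal, and each is recoverable from the hull as the unique extreme point putting mass $\tfrac12$ on the $i$-th tag coordinate; hence $\phi$ of the tagged morphisms determines $f$ and $g$ completely, giving $f=g$. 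Until you exhibit such a construction and check it is built from structure that $F$ preserves, the proof is incomplete at precisely the point you yourself flag as carrying ``essentially all the work''.
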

\begin{proof}
  Since $F$ preserves finite coproducts, it is sufficient to suppose the domain of $f$ and $g$ is $1$.
  That is, let $f,g \in \MainCategory_m(1,n)$ and suppose $\phi_{m,n}$ factors as
  \[
    \MainCategory_m(1,n) \xrightarrow{F_{m,n}} \CCat_m(1,n) \xrightarrow{\psi_{m,n}} \CPf(n).
  \]
  Let $d \in \MainCategory_m(1,m)$ be the evident tuple of Diracs.
  Define $\iota \in \MainCategory_1(m, n+m)$ and
  $\jmath \in \MainCategory_1(n, m+n)$ as the
  lifting of the injections $m \to m+n \leftarrow n$
  via postcomposition with the unit of $D$.
  Since $F$ is a graded distributive Markov functor
  and $F_{m,n}(f) = F_{m,n}(g)$,
  \[
    F_{m,m+n}(\jmath \circ f +_{0.5} \iota \circ d)
      = F_{m,m+n}(\jmath \circ g +_{0.5} \iota \circ d)
  \]
  where for $h, k: X \to D(Y)$, we define $(h+_{r} k)(x) = h(x) +_{r} k(x)$. Applying $\psi$ gives
  \begin{equation} \label{eq:convex_hulls}
    \phi_{m,m+n}(\jmath \circ f +_{0.5} \iota\circ d)
      = \phi_{m,m+n}(\jmath \circ g +_{0.5} \iota\circ d).
  \end{equation}
  Now, for all $i\in m$, $(\jmath \circ f +_{0.5} \iota \circ d)(i)$ are independent
  because they each use a different dimension. They are all extremal vertices on the convex
  hull $\phi_{n,m+n}(\jmath \circ f +_{0.5} \iota \circ d)$. Moreover, they must be the
  same vertices as $(\jmath \circ g +_{0.5} \iota \circ d)(i)$ for respective $i\in m$
  because the convex hulls are the same~\eqref{eq:convex_hulls}. Therefore,
  \[
    \jmath \circ f +_{0.5} \iota \circ d = \jmath \circ g +_{0.5} \iota \circ d.
  \]
  We can recover $f$ and $g$ as for any $i \in m$ and $j \in n$,
  \begin{align*}
    f(i)(j) = 2\times (\jmath \circ f +_{0.5} \iota \circ d)(i)(j), \\
    g(i)(j) = 2\times (\jmath \circ g +_{0.5} \iota \circ d)(i)(j).
  \end{align*}
  So $f = g$.
\end{proof}

\subsection{Remarks on different approaches to quotients} \label{sec:quotients}
We briefly remark on a different approach to quotients in a locally graded category.
This is a general method but connects to our language as follows. Notice that an open term in our language contains both names and variables: names for Knightian choices, and free variables standing for ordinary values that might be substituted later. There are two ways to quotient the names away, depending on how we order the quantifiers: 
\begin{description}
\item[($\forall\exists$) The approach leading to the $\CP$ monad (Theorems~\ref{thm:unc-bounds} and~\ref{thm:maximality}):] We could equate two open terms if, for every valuation of the free variables, there is a regrading that equates them.
  This violates Desideratum~\ref{des:1} but gives rise to an op-lax functor (Theorem~\ref{thm:unc-bounds}).
\item[($\exists\forall$) Alternative quotient:] We could equate two open terms if there is a regrading such that for every valuation of the free variables they are made equal. This violates Desideratum~\ref{des:2}, as we discuss below.
\end{description}

For closed terms with no free variables, the two approaches are the same and give rise to a convex set of distributions (Prop.~\ref{prop:rs2cp}).

The `alternative quotient' approach ($\exists\forall$) amounts to the `connectedness' quotient in~\cite[Def.~2.1]{hermida-tennent}. This does not satisfy Desideratum~\ref{des:2} (commuting if-then-else). Informally, it would allow us to work up to a different regrading on the `then' branch versus the `else' branch, which leads to the inconsistency. More formally, the construction of~\cite{hermida-tennent} does not yield a category with coproducts in general. For this reason, this alternative quotient approach is not a counterexample to Theorem~\ref{thm:maximality}. Nonetheless, it could be a useful approach in a metalanguage for combining models that do not need a general if-then-else construction. 

\section{Context and related work} \label{sec:notes-about-context}

Our focus here is on imprecise probability. This can be thought of as a kind of non-determinism. Specifically, we are combining the effects of probabilistic choices (\lstinline|bernoulli|) with non-deterministic choices arising from unknown probability distributions (\lstinline|knight|). 
There is a broader general interest in non-determinism and its combination with probability.
Non-determinism arises in many semantic situations beyond the motivation from unknown probabilities in imprecise probability.

\paragraph{Comparison with abstraction/refinement.} One arguably different motivation is in program abstraction and refinement: there, one describes a problem by writing a non-deterministic program that solves it in a non-determined way; the problem is then solved by refining that non-deterministic program. 
When the refined program is still probabilistic, the mathematical analysis is similar to imprecise probability, and for instance illustrations essentially the same as Figure~\ref{fig:tris} appear in work on the refinement of probabilistic programs~\cite[Fig.~6.4.2, Fig.~6.5.1]{mciver-morgan-book}. However, the motivation is arguably different and the desiderata (\S\ref{sec:intro:desiderata}) may be less relevant in program refinement. 

\paragraph{Contrast with random sets and random bags}
An arguably different kind of non-determinism appears where there are many appropriate results that we want to collect. In this sense, for instance, database queries are non-deterministic if they return multiple results, and Prolog is non-deterministic. When combined with probability, this leads more naturally to random sets or random bags, which contrast the sets of distributions shown in Figure~\ref{fig:tris}. Random bags do arise in probabilistic databases and point process theory. These applications have been considered from a monad perspective~\cite{dash-prob-db,dash-pp}, and the monads have long been discussed (e.g.~\cite{dsp-layer,keimel-plotkin,dash-thesis,kozen-silva,varacca-winskel,jacobs-bags}). 

Broadly, there is a contrast between \emph{sets of distributions}, which arise in imprecise probability and program refinement, and \emph{distributions over sets}, which appear in probabilistic databases and point processes.

\subsection{Relationship with work on distributive laws of monads}
There is a large literature on finding elegant explanations for combining existing monads for probability and non-determinism, exploring distributive laws of monads (e.g.~\cite{petrisan-sarkis,goy-petrisan,diaa,acg-diagrams-prob-nondet,varacca-winskel,bonchi-sokolova-vignudelli-B,dsp-layer,kozen-silva}).
In fact, the reader monad transformer that we use here is a distributive law of monads. However, our emphasis and motivation are from the commutativity desiderata (\S\ref{sec:intro:desiderata}) rather than distributivity issues.
We note that commutativity and affinity imply some distributive behaviour, allowing us to reorder computations into non-determinism followed by probability or vice versa. In~\eqref{distributivity} below, we derive a typical distributivity equation from just the basic desiderata. But that is not quite the same as a distributive law of monads, which is a very specific equational requirement~\cite[Def.~3, Thm.~5]{pirog-staton}; in some situations, it amounts to a weak distributive law of monads~\cite{goy-petrisan}.

Even when there is a distributive law between commutative monads,
the resulting composite monad need not be commutative. Indeed both the random bags monad~\cite{dash-thesis} and the powerdomain of indexed valuations monad~\cite{varacca-winskel} arise from distributive laws between commutative monads, but neither composite monads are commutative. This suggests that given our desiderata (\S\ref{sec:intro:desiderata}), distributive laws of monads are not necessarily the best starting point. 

\subsection{Algebraic perspective on probability and non-determinism} \label{sec:alg-perspective}
Some previous work on combining probability and non-determinism takes the perspective of algebraic theories. Our desiderata (\S\ref{sec:intro:desiderata}) can be viewed from the point of view of algebraic theories, via algebraic effects (e.g.~\cite{pp-notions}), which we now briefly explore.
We define two binary operations:
\[(t\probplus u) \defeq  \mlstinline{if bernoulli then t else u}
  \qquad
  (t\ndplus u) \defeq \mlstinline{if knight then t else u}
\]
Regarding Desideratum~\ref{des:1}, commutativity means that each operation is a homomorphism for the other:
  \begin{equation}
    \label{eqn:commutativity}
    \begin{array}{c@{\,}c@{\,}c@{\,}c@{\,}c@{\,}c@{\,}ccc@{\,}c@{\,}c@{\,}c@{\,}c@{\,}c@{\,}c}
(s &\ndplus& t)&\ndplus &(u&\ndplus &v)&=&
  (s &\ndplus &u)&\ndplus& (t&\ndplus &v)\\[6pt]
  (s &\probplus &t)&\probplus &(u&\probplus &v)&=&
  (s &\probplus &u)&\probplus &(t&\probplus &v)
\\[6pt]  (s &\ndplus &t)&\probplus &(u&\ndplus &v)&=&
  (s &\probplus& u)&\ndplus& (t&\probplus &v)
\end{array}\end{equation}
and affinity says that
\begin{equation}\label{eqn:affine}
  t\ndplus t=t\qquad\text{and}\qquad t\probplus t=t\text.
  \end{equation}
  In general, an algebraic structure satisfying commutativity and affinity is recognized as an abstract theory of
  probability, being the algebraic counterpart to Markov categories, and is sometimes called a `mode'~\cite{modes}. 

  Desideratum~\ref{des:2} is always assumed in algebraic effects. 

From these five axioms \eqref{eqn:commutativity}--\eqref{eqn:affine}, we can derive distributivity equations:
\begin{equation}\label{distributivity}\begin{array}{ccccc}
    t \ndplus (u \probplus v)&\stackrel{\text{affinity}}=&
  (t \probplus t) \ndplus (u \probplus v)&\stackrel{\text{commutativity}}=&
      (t \ndplus u) \probplus (t \ndplus v)\\
    t \probplus (u \ndplus v)&\stackrel{\text{affinity}}=&
  (t \ndplus t) \probplus (u \ndplus v)&\stackrel{\text{commutativity}}=&
  (t \probplus u) \ndplus (t \probplus v)
\end{array}\end{equation}
The first equation in~\eqref{distributivity} instantiates to 
\[\rx \ndplus (\gx \probplus \bx)=(\rx\probplus \rx) \ndplus (\gx \probplus \bx)
=      (\rx \ndplus \gx) \probplus (\rx \ndplus \bx)
\]
which is the problematic derivation of Figure~\ref{fig:derivation} written in algebraic form.
If we have an intuition of $\ndplus$ as a Minkowski sum (see \S\ref{sec:convpowerset}) then the left-hand side appears to be Figure~\ref{fig:tris}b and the right-hand side is Figure~\ref{fig:tris}c, especially since
the latter can be further rearranged using the commutativity and affinity laws to 
\[  (\rx\ndplus (\gx \probplus \rx) ) \ndplus ((\rx \probplus \bx) \ndplus (\gx \probplus\bx))
\]
which enumerates the four extreme points of Figure~\ref{fig:tris}c.

The symmetry laws
\begin{equation}t\probplus u=u\probplus t
  \label{eqn:symmetryprobplus}\end{equation}
\begin{equation} t\ndplus u =u\ndplus t 
  \label{eqn:symmetryvee}\end{equation}
are also desirable, but we note that it is already known that Desiderata~1  (\eqref{eqn:commutativity}--\eqref{eqn:affine}) are incompatible with them,
since together they imply
that the two binary operations ($\ndplus$, $\probplus$) are equal! (e.g.~\cite{eckmann-hilton-nlab}):
\begin{equation}\label{eqn:eckmann-hilton}
  t\ndplus u=(t\ndplus u)\probplus (t\ndplus u)=
  (t\ndplus u)\probplus (u\ndplus t)=
  (t\probplus u)\ndplus(u\probplus t)=
  (t\probplus u)\ndplus(t\probplus u)=
  t\probplus u
\end{equation}
We can regard the various previous algebraic works as different ways to avoid this Eckmann-Hilton-like obstacle~\eqref{eqn:eckmann-hilton}, by omitting various equations:
\begin{itemize}
\item The point process monad~\cite{keimel-plotkin,kozen-silva,jacobs-bags,dash-thesis} omits the last commutativity law~\eqref{eqn:commutativity} and the affine law for~$\ndplus$~\eqref{eqn:affine};
  it keeps the first distributivity law of~\eqref{distributivity}.
\item The powerdomain of indexed valuations~\cite{varacca-winskel} also omits the last commutativity law~\eqref{eqn:commutativity} but omits the affine law~\eqref{eqn:affine} for $\probplus$ instead of for $\ndplus$;
  it keeps the first distributivity law of~\eqref{distributivity}.
\item The convex powerset monad (e.g.~\cite{bonchi-sokolova-vignudelli-B}) also drops the last commutativity law but
  it keeps the \emph{second} distributivity law of~\eqref{distributivity}.
\end{itemize}
As we now explain, our approach can be viewed as a graded version of the algebraic perspective, in which we have \emph{all the equational laws}.

\subsection{Graded algebraic perspective} \label{sec:graded-alg}

We can look at our graded category $\MainCategory$ from this algebraic perspective,
using the graded algebraic theories of Kura~\cite[\S3.1]{kura-graded-algebraic}. We now summarize this angle. 
From the algebraic perspective, the proposal in this paper is to name the Knightian branching, which amounts to having named binary operators~$\ndplus_{a_1}$, $\ndplus_{a_2}$, \dots.
To be precise, in the formalism of \cite[\S3.1]{kura-graded-algebraic} we have a graded signature
\[\probplus \in \Sigma_{2,\emptyset}\qquad\ndplus_a\in \Sigma_{2,2^{\{a\}}}
\]
where $a\in \Atoms$. That is, we have a binary operation $\probplus$ with empty grade, and
binary operations $\ndplus_a$ with grade $2^{\{a\}}$. 
From this we define a grammar for terms $T_g(X)$ with variables in $X$ at grade $g$
\cite[Def.~10]{kura-graded-algebraic}:
\[
  \begin{prooftree}
    x\in X
    \justifies
    x\in T_{2^\emptyset}(X)\end{prooftree}
\qquad
\begin{prooftree}
  t,u\in T_g (X)\justifies
  t\probplus u\in T_g(X)
\end{prooftree}
\qquad
\begin{prooftree}
  t,u\in T_g (X)\justifies
  t\ndplus_a u\in T_{2^{\{a\}}\otimes g}(X)
\end{prooftree}
  \qquad
  \begin{prooftree}
    t\in T_g(X)
    \justifies
    c_f(t)\in T_h(X)
  \using \mbox{\small $f\colon h\to g$}\end{prooftree}
\]
Here `$c_f(t)$' indicates regrading along $f\colon h\to g$.
Henceforth we elide $c_f$ where $f$ is a `weakening' regrading, i.e.~a canonical projection
$2^{S}\to 2^{S'}$ for $S'\subseteq S\subset\Atoms$. 

We can consider analogues of all the laws~\eqref{eqn:commutativity}--\eqref{eqn:symmetryvee} above, in this graded setting:
\begin{equation}  \label{eqn:commgraded}  \begin{array}{c@{\,}c@{\,}c@{\,}c@{\,}c@{\,}c@{\,}ccc@{\,}c@{\,}c@{\,}c@{\,}c@{\,}c@{\,}c}
(u &\ndplus_b& v)&\ndplus_a &(x&\ndplus_b &y)&=&
  (u &\ndplus_a &x)&\ndplus_b& (v&\ndplus_a &y)\\[6pt]
  (u &\probplus &v)&\probplus &(x&\probplus &y)&=&
  (u &\probplus &x)&\probplus &(v&\probplus &y)
\\[6pt]  (u &\ndplus_a &v)&\probplus &(x&\ndplus_a &y)&=&
  (u &\probplus& x)&\ndplus_a& (v&\probplus &y)
                                          \end{array}\end{equation}
                                        \begin{align}
  x\probplus x&=x
                \\
  x\ndplus_a x&=x
  \\
  x\probplus y&=y\probplus x\hspace{-3cm}
                \\
x\ndplus_a y &= c_{\neg}(y\ndplus_a x)\hspace{-3cm}
%  \\
 % (x\ndplus_a y)\ndplus_b z &= c_{f}(x\ndplus _a(y\ndplus_b z))
         \label{eqn:assocgraded}
\end{align}
where $\neg:2^{\{a\}}\to 2^{\{a\}}$ is the non-trivial bijection.
%and $f\colon 2^{\{a,b\}}\to 2^{\{a,b\}}$ is given by $f(\phi)=\lambda x\in\{a,b\}.\, \phi(x)\wedge \phi(b)$, which forces
%the value at $a$ to be false if its false at $b$ to account for the promotion of the $\ndplus_a$ operator from the reassociation of the brackets.

Revisiting the example of Figure~\ref{fig:derivation}, we can use the graded deductive system of~\cite[Thm.~1]{kura-graded-algebraic} to derive:
\[
  \rx \ndplus_{a_1} (\gx \probplus \bx)=
  (\rx \probplus \rx) \ndplus_{a_1} (\gx \probplus \bx)=
  (\rx \ndplus_{a_1} \gx) \probplus (\rx \ndplus_{a_1} \bx)\text.
\]
yet it is consistent to assume
\[(\rx \ndplus_{a_1} \gx) \probplus (\rx \ndplus_{a_1} \bx)
\quad\neq\quad 
  (\rx \ndplus_{a_1} \gx) \probplus (\rx \ndplus_{a_2} \bx)\text.
\]
with different names on the right-hand side. These two terms correspond to the difference between Figures~\ref{fig:tris}b and~\ref{fig:tris}c.

Our locally graded category $\MainCategory$ (Def.~\ref{def:imp})
is such that its dual $(\MainCategory)\op$ can be regarded as a model of this graded algebraic theory, in the sense of \cite[Def.~11]{kura-graded-algebraic} (mildly generalized to use locally graded categories instead of actions).
We formulate the model by interpreting $(\probplus)$ and $(\ndplus_a)$ using
\lstinline|bernoulli| and \lstinline|knight| from \S\ref{sec:maincategory}.
All the equations~\eqref{eqn:commgraded}--\eqref{eqn:assocgraded} are satisfied in this model.

One view is that we have side-stepped the Eckmann-Hilton-like obstacle~\eqref{eqn:eckmann-hilton} by requiring the explicit regrading in the symmetry law $x\ndplus_a y = c_{\neg}(y\ndplus_a x)$. 

\paragraph{Aside.} We note that a named binary choice already appears in the probabilistic setting in~\cite{beta-bernoulli}, where it has a different intuitive meaning: there, $a$ stands for an urn but not a specific draw, and $?_a$ denotes sampling from urn~$a$ according to Polya's scheme: replace with two copies of what was drawn. This is different because each $(?_a)$ can be used multiple times, gathering statistics about an urn, whereas a Knightian draw can only be used once.

\section{Summary and outlook}

We have shown that by taking a graded perspective and naming Knightian choices we can obtain a compositional account of Bernoulli and Knightian uncertainty together, satisfying the Desiderata of \S\ref{sec:intro:desiderata}. The account gives a refined bound on the uncertainty (Theorem~\ref{thm:unc-bounds}) and is maximal among the compositional accounts (Theorem~\ref{thm:maximality}).

There are several future directions. An initial question is how to accommodate iteration. The convex sets considered in this article are all finitely generated. If we allow iterative programs that have an unbounded number of Knightian choices, this leads to a more general class of convex sets.

The concerns about iteration hold even if we restrict to finite outcome spaces, and thus far we have focused on this for simplicity.
Much work on programming semantics for imprecise probability has focused beyond finite outcome spaces, and it will be interesting to revisit this from our perspective: this includes domain theoretic structures (e.g.~\cite{keimel-plotkin,varacca-winskel,keimel-mfps,gl-previsionsB})
and metric structures (e.g.~\cite{mio-vignudelli,mio-sarkis-vignudelli}).

It would be interesting to compare to another recent compositional framework combining unknowns with probability by Stein and Samuelson, currently focusing on Gaussians~\cite{stein-samuelson-journal}.

Our approach is based on random elements, and so is the quasi-Borel-space probability monad (e.g.~\cite{qbs,DBLP:journals/pacmpl/VakarKS19}), so this might be a good approach to accommodating function spaces. On the other hand, we are enriching in $[\FinStochSurj\op,\Set]$, which seems closely related to
the toposes recently used in \cite{li-nominal-prob-sep,simpson-topos} for program logic and independence of random variables; in particular, \cite{li-nominal-prob-sep} considers sheaves on $\FinSetSurj$. 

On the more practical side, an open question is how to perform statistical inference in a probabilistic programming language with imprecise probability. 

Going beyond statistics, there may be other scenarios where this approach is useful: making a theory compositional by using a graded theory (for a first purely speculative example, the issues with amb outlined in~\cite{pbl-amb}).

\paragraph{Acknowledgements}
The 2023/2024 Aria workshops emphasised aspects of imprecise probability as discussion topics, and there it was helpful to discuss with davidad, Alexander Appel, Dylan Braithwaite, Vanessa Kosoy, Elena Di Lavore, Alex Lew, Owen Lynch, Sean Moss, Zenna Tavares, and others. It has been helpful to discuss with colleagues in Oxford (particularly Jeremy Gibbons and Paolo Perrone, and some time ago Kwok Ho Cheung regarding his thesis work~\cite{diaa}), and at the 2024 Bellairs workshop. We also received helpful feedback from Hugo Paquet, Dario Stein, and MFPS 2024 reviewers, where this material was presented as an `early announcement'.  

Research supported by the Clarendon Scholarship, ERC Consolidator Grant BLAST, AFOSR Project FA9550-21-1-0038, and the ARIA programme on Safeguarded AI.

\bibliographystyle{ACM-Reference-Format}
\bibliography{refs}
\end{document}